\def\BState{\State\hskip-\ALG@thistlm}
\providecommand{\keywords}[1]{\textbf{\textit{Index terms---}} #1}
\newcommand{\E}[1]{\mathbb{E} \left\{#1 \right\}}
\renewcommand{\P}[1]{\mathbb{P}\left\{ #1 \right\} }
\newcommand{\I}[1]{\mathds{1}\left\{ #1 \right\}}
\newcommand{\Pw}[2]{\mathbb{P}_{#1}\left\{ #2 \right\} }
\newcommand{\bs}[1]{\boldsymbol{#1}}
\newtheorem{definition}{Definition}
\newtheorem{theorem}{Theorem}
\newtheorem{lemma}{Lemma}
\newtheorem{proposition}{Proposition}
\DeclareMathOperator*{\esssup}{ess\,sup}
\DeclareMathOperator*{\essinf}{ess\,inf}
\pgfplotsset{compat = newest}
\newcommand{%
  \tikzsetnextfilename{}%
  \filemodCmp{.tikz}{external/.pdf}%
    {\tikzset{external/remake next}}{}%
  \input{.tikz}%
}[2]{%
  \tikzsetnextfilename{#2}%
  \filemodCmp{#1#2.tikz}{#1external/#2.pdf}%
    {\tikzset{external/remake next}}{}%
  \input{#1#2.tikz}%
}
\begin{document}
\sffamily
\title{Social learning for resilient data fusion\\ against data falsification attacks}
\author{Fernando Rosas$^{1,2}$, Kwang-Cheng Chen$^{3}$ and Deniz G\"{u}nd\"{u}z$^{2}$\vspace{0.2cm}\\
$^1$ \small{Centre of Complexity Science and Department of Mathematics, Imperial College London, UK} \\
$^2$ \small{Department of Electrical and Electronic Engineering, Imperial College London, UK}\\%
$^3$ \small{Department of Electrical Engineering, University of South Florida, USA}\\
}
\date{}
\maketitle

\abstract{
Internet of Things (IoT) suffers from vulnerable sensor nodes, which are likely to endure data falsification attacks following physical or cyber capture. Moreover, centralized decision-making and data fusion schemes commonly used by these networks turn these decision points into single points of failure, which are likely to be exploited by smart attackers. In order to face this serious security thread, we propose a novel scheme that enables distributed data aggregation and decision-making by following social learning principles. Our proposed scheme makes sensor nodes to act resembling the manners of agents within a social network. We analytically examine how local actions of individual agents can propagate through the whole network, affecting the collective behaviour. Finally, we show how social learning can enable network resilience against data falsification attacks, even when a significant portion of the nodes have been compromised by the adversary.
}

\keywords{Distributed decision-making, data fusion, sensor networks, social networks, data falsification attacks, Byzantine nodes, collective behaviour, social learning, information cascades.}

\newpage

\section{Introduction}
\label{intro}

\subsection{Motivation}

Internet of Things (IoT) is expected to play a central role in digital society. However, before adopting this technology it is crucial to guarantee its security, specially for those public utilities whose safety is crucial for the well-being of society~\cite{kim2012cyber}. Recent cyber-attacks that created significant damage have been widely reported, e.g., the self-propagating malware \textit{WannaCry} that caused a famous worldwide network hack in May 2017~\cite{wannacry2017}. Therefore, developing technologies for guaranteeing the safety of large information networks such as IoT is a challenging but urgent need. As information networks get more closely intertwined with our daily lifes, ensuring their safety in the future will become an even more challenging issue.

As the level of security is determined by the weakest element, a major dilemma of IoT security lies in the low-complexity sensor networks that are located at the edge of the system. These sensor networks are usually composed by a large number of autonomous electronic devices, which collect information that is critical to the control and operation of IoT~\cite{veeravalli2012distributed,barbarossa2013distributed}. By monitoring extensive geographical areas, these networks enable a wide range of services to civil society, being a key element for the well-being of smart cities \cite{hancke2012role,difallah2013scalable}.
These networks may also perform sensitive tasks, including the surveillance over military or secure zones, intrusion detection to private property, monitoring of drinkable water tanks and protection from chemical attacks~\cite{lambrou2015contamination,lambrou2014low}.

Although the design of secure wireless sensor networks have been widely studied (e.g. \cite{perrig2004security,shi2004designing,pathan2006security} and references therein), there remain open problems of both theoretical and engineering nature~\cite{trappe2015low}. In particular, as the number of sensors is usually very large, a precise management of them is therefore challenging, or even unfeasible. A significant portion of the sensors might be deployed in unprotected areas where it is impossible to ensure their physical or cyber security (e.g., war zones, or regions controlled by an adversary). Furthermore, sensor nodes are generally not tamper-proof due to cost restrictions, and have hence limited computing and networking capabilities. Therefore, they may not be capable of employing reliable cryptographic or security functions of high complexity.

The vulnerability of sensor nodes makes it reasonable to expect that they might be victims of cyber/physical attacks driven by intelligent adversaries. Attacks to information networks are usually categorized into \textit{outside attacks} and \textit{insider attacks}. Outside attacks include (distributed) denial of services (DoS), which use the broadcasting nature for wireless communications to disrupt the communications capabilities~\cite{shi2004designing}. In contrast, in insider attacks the adversary ``recruits'' sensor nodes by malware through cyber/wireless means, or directly by physical substitution~\cite{marano2009distributed}. Following the classic \textit{Byzantine Generals Problem}~\cite{lamport1982byzantine}, these ``Byzantine nodes'' are authenticated, and recognized as valid members of the network. Byzantine nodes can hence generate false data, exhibit arbitrary behaviour, and collude with others to create network malfunctions. In general, inside attacks are considered to be more dangerous to information networks than outside attacks.

The effect of Byzantine nodes and data falsification over distributed sensor networks has been intensely studied; the impact over the network performance has been characterized, and various defense mechanisms has been proposed (c.f. \cite{vempaty2013distributed} for an overview, and also \cite{6778791,6998053,7134807, kailkhura2014distributed, 7065266} for some recent contributions). However, all these works focus on networks with star or tree topology, and rely on centralising the decision-making in special nodes, called ``fusion centers'' (FCs), which gather all the sensed data. Therefore, a key element in these approaches is a strong division of labour: ordinary sensor nodes merely sense and forward data, while the processing is done exclusively at the FC, corresponding to a \textit{distributed-sensing/centralized-processing} approach. This literature implicitly assume that the FCs are capable of executing secure coding and protocols, and hence are out of the reach of attackers. However, large information networks might require another kind of mediator devices, known as data aggregators (DAs), which have the capability to access the cloud through high-bandwidth communication links~\cite{chen2014machine}. DAs are attractive targets to insider attacks, as they might also be located in unsafe locations due to the limited range of sensor nodes' radios. Please note that a tampered DA can completely disable the sensing capabilities of all the nodes whose information has been aggregated, generating a single point of failure that is likely to be exploited by smart adversaries~\cite{parno2005distributed}.

An attractive route to address this issue is to consider \textit{distributed-sensing/distributed-processing} schemes, which avoid centralized decision making while distributing processing tasks throughout the network~\cite{lin2014improving}. However, the design of practical distributed-sensing / distributed-processing schemes is a challenging task, as collective computation phenomena usually exhibit highly non-trivial features~\cite{daniels2016quantifying,brush2018conflicts}. In effect, even though the distributed sensing literature is vast (for classic references c.f. \cite{tsitsiklis1993decentralized,viswanathan1997distributed,blum1997distributed}, and more modern surveys see \cite{chen2006channel,chamberland2007wireless,veeravalli2012distributed,barbarossa2013distributed}), the construction of optimal distributed schemes is in general NP-hard \cite{tsitsiklis1985complexity}. Moreover, although in many scenarios the optimal schemes can be characterized as a set of thresholds for likelihood functions, the determination of these thresholds is usually an intractable problem~\cite{tsitsiklis1993decentralized}. For example, homogeneous thresholds can be suboptimal even for networks with similar sensors arranged in star topology~\cite{warren1999optimum}, being only asymptotically optimal in the network size \cite{chamberland2004asymptotic}. Moreover, symmetric strategies are not suitable for more complicated network topologies, requiring heuristic methods.

\subsection{Distributed decision-making in social learning}

In parallel, significant research efforts have been dedicated to analyzing 
\textit{social learning}, which refers to the decision-making processes that take place within social networks~\cite{easley2010networks}. In these scenarios, agents make decisions based on two elements: private information that represents agent's personal knowledge, and social information derived from previous decisions made by the agent's peers~\cite{acemoglu2011opinion}. 

Social learning was initially investigated by pioneering works that studied sequential decision-making of Bayesian agents over simple social network structures~\cite{banerjee1992simple,bikhchandani1992theory}. These models showed how, thanks to social interactions, individuals with weak private signals can harvest information from the decisions of other agents~\cite{bikhchandani1998learning}. Interestingly, it was also found that aggregation of rational decisions could generate suboptimal collective responses, degrading the ``wisdom of the crowds'' into mere herd behaviour. After these initial findings, researchers have aimed at developing a deeper understanding of information cascades extending the original models by considering more general cost metrics~\cite{smith2000pathological,bala2001conformism,banerjee2004word}, and by studying the effects of the network topology on the aggregated behaviour~\cite{gale2003bayesian,gill2008sequential,acemoglu2011bayesian,hsiao2016}. Non-bayesian learning models have also been explored, where agents use simple rule-of-thumb methods to exchange information~\cite{demarzo2001persuasion,golub2010naive,acemoglu2010spread,jadbabaie2012non,lalitha2014social,rhim2014distributed,7248496}.

Social learning plays a crucial role in many important social phenomena, e.g., in the adoption or rejection of new technology, or in the formation of political opinions~\cite{easley2010networks}. Social learning models are particularly interesting for studying information cascades and herd dynamics, which arises when the social information pushes all the subsequent agents to ignore their own personal knowledge and adopt a homogeneous behaviour~\cite{bikhchandani1992theory}. Moreover, there have been a renewed interest in understanding information cascades in the context of e-commerce and digital society~\cite{hsiao2016}. For example, information cascades might have tremendous consequences in online stores where customers can see the opinions of previous customers before deciding to buy a product, or in the emergence of viral media contents based on sequential actions of like or dislike. Therefore, developing a deep understanding of the mechanics behind information cascades are triggered, and how they impact social learning, is fundamental for our modern networked society.

The main motivation behind this article is to explore the connections that exist between social learning and secure sensor networks, building a bridge between the research done separately by economists and sociologist by one side and electrical engineers and computer scientists by other. A key insight for establishing this link is to realize that each agent's decision correspond to a compressed description of his/her private information. Therefore, the fact that agents cannot access the private information of others, but can only observe their decisions, can be understood as a constraint in the communication resources. In this way, social learning can be regarded as an information network that performs distributed inference (see Table~\ref{table1}). Moreover, it would be natural to use social learning principles in the design of distributed-sensing/distributed-processing schemes, with the hope that this might enables additional robustness to decision-making processes in sensor networks.
\begin{table}[h]
\caption{Table of correspondances}
\label{table1}
\begin{center}
\begin{tabular}{|c|c|} \hline \hline
Distributed detection      & Social learning             \\ \hline\hline
Sensor node                   &  Social agent                \\
Communication range     & Social neighbourhood \\ 
Environmental variables   & State of the world    \\
Noisy measurement         & Private information   \\
Local processing               & Agent's decision         \\
Bandwidth constraints       & Decision sharing        \\ 
\hline\hline
\end{tabular}
\end{center}
\end{table}
%


\subsection{Contribution}

In contrast to almost all the existing research, this work considers powerful topology-aware data falsification attacks, where the adversary knows the network topology and leverages this knowledge to take control of the most critical nodes of the network ---either regular nodes, DAs or FCs.  This represents a worst-case scenario where the network structure has been disclosed, e.g. through network tomography via traffic analysis~\cite{castro2004network}. The reason why this adversary model has not been popular in the literature might be because traditional distributed sensing schemes do not offer any resistance against this kind of attack.

In this work we explore the use of a distributed-sensing/distributed-processing scheme based on social learning principles in order to deal with a topology-aware adversary. The scheme is a threshold-based data fusion strategy, related to those considered in~\cite{tsitsiklis1993decentralized}. However, its relationship with social decision-making allows an intuitive understanding of its mechanisms. For avoiding security threads introduced by fusion centers, our scheme uses tandem or serial decision sequencing~\cite{viswanathan1988optimal,papastavrou1992distributed,swaszek1993performance,viswanathan1997distributed,bahceci2005serial}. It is to be noted that, contrasting with some related literature, our analysis does not focus on optimality aspects of data fusion, but aims to illustrate how distributed decision making can enable network resilience against powerful topology-aware data falsification attacks. We demonstrate how network resilience hold even when a significant number of nodes have been compromised.

Our work exploits a positive effect of information cascades that has been overlooked before: information cascades make a large number of agents/nodes to hold equally qualified estimators, generating many locations where a network operator can collect aggregated data. Therefore, information cascades are crucial in our solution for avoiding single points of failure. For enabling a better understanding of information cascades, this work extends results presented in \cite{rosas2017technological} providing a mathematical characterization of information cascades under data falsification attacks. In particular, our results clarify the conditions upon which local actions of individual agents can propagate across the network, compromising the collective performance. These results provides a first step in the clarification of these non-trivial social dynamics, enriching our understanding of decision-making process in biased social networks. 

This paper expands the ideas presented in \cite{rosas2017social} by developing a formalism that allows considering incomplete or imperfect social information. This formalism is used to overcome the strongest limitation of the scheme presented in \cite{rosas2017social}, namely the fact that each node was required to overhear and store all the previous transmissions in the network. Clearly this cannot take place in a large sensor network, due both to the storage constraints of the nodes, and to the large energy consumption required to transmit and receive across all pairs of nodes~\cite{rosas2012modulation}. Therefore, the present work is an important step towards making this scheme more relevant for practical applications.

The rest of this article is structured as follows. Section~\ref{sec:2} introduces the system model, describing the network controller and the adversary behaviour. Our social learning data fusion scheme is then described in Section~\ref{sec:III}, where some basic statistical properties are explored and practical algorithm for implementing the decision rule is derived. Section~\ref{sec:cascades} analyses the mathematical properties of the decision process, providing a geometrical description and a characterization of information cascades. All these ideas are then illustrated in a concrete scenario in Section~\ref{sec:4}. Finally, Section~\ref{sec:5} summarizes our main conclusions.

\section{System model and problem statement}
\label{sec:2}

This section introduces the basic elements of our framework and settles the basis for our social learning scheme, which is then introduced in Section~3. 
In the rest of the paper uppercase letters $X$ are used to denote random variables and lowercase $x$ realizations of them, while boldface letters $\bs{X}$ and $\bs{x}$ represent vectors. Also, $\Pw{w}{X=x|Y=y} = \P{X=x|Y=y,W=w}$ is used as a shorthand notation.

\subsection{System model}
\label{sec:social_info}

We consider a network of $N$ nodes, each corresponding to an information-processing device that has been deployed in an area of interest. Each node is equipped with a sensor that enables the network to track variables of interest. The measurement of the sensor of the $n$-th node is denoted by $S_n$, taking values over a set $\mathcal{S} \subset \mathbb{R}$ that can be discrete or continuous\footnote{The generalization of our framework and results to vector-valued sensor outputs is straightforward.}
. Based on these signals, the network needs to infer the value of an underlying binary variable $W$.

We consider networks where all the nodes have equal sensing capabilities, that is, the signals $S_n$ are assumed to be identically distributed. Unfortunately, the general distributed detection problem for arbitrarily correlated signals is known to be NP-hard~\cite{tsitsiklis1985complexity}. Hence, for the sake of tractability, it is assumed that the variables $S_1,\dots, S_N$ are conditionally independent given the event $\{W=w\}$\footnote{The conditional independence of sensor signals is satisfied when the sensor noise is due to local causes (e.g. thermal noise), but do not hold when there exist common noise sources (e.g. in the case of distributed acoustic sensors~\cite{bertrand2011applications}). For works that consider sensor interdependence see \cite{kam1992optimal,chen1998adaptive,willett2000good,chamberland2006dense,sundaresan2011copula}.
}, following a probability distribution denoted by $\mu_w$. It is also assumed that both $\mu_0$ and $\mu_1$ are absolutely continuous with respect to each other~\cite{Loeve1978}, i.e., no particular signal determines $W$ unequivocally. This property, in turn, guarantees that the log-likelihood ratio of these two distributions is always well-defined, being given by the logarithm of the corresponding Radon-Nikodym derivative\footnote{When $S_n$ takes a finite number of values then $\frac{d \mu_1}{d \mu_0} (s) = \frac{ \P{ S_n=s|W=1}}{ \P{ S_n=s|W=0}}$, while if $S_n$ is a continuous random variable with conditional p.d.f. $p(S_n|W=w)$ then $\frac{d \mu_1}{d \mu_0} (s) = \frac{ p(s|W=1) }{ p(s|W=0) }$.} $\Lambda_S(s) = \log \frac{d \mu_1}{d \mu_0} (s) $. 

In addition to sensing hardware, each node is equipped with limited computing capability and a wireless radio to transit and receive data. Two nodes in the network are assumed to be connected if they can exchange information wirelessly. Note that sensor nodes usually have a very limited battery budget, which impose severe restrictions on their communication capabilities~\cite{karl2007protocols}. Therefore, it is assumed that each node forwards its data to others only by broadcasting a binary variable $X_n$. These simple signals do not impose an additional burden on the communication resources, as they could be appended to existent wireless control packages and viceversa, or could be shared by light, ultrasound or other alternative media.

We focus on the case in which the sensing capabilities of each sensor are limited; and hence, any inference about $W$ made based only on the sensed data $S_n$ cannot achieve a high accuracy. Interestingly, due to the nature of wireless broadcasting, nearby transmissions can be overheard, and therefore the information that they carry can be fused with what is extracted from the local sensor. The information that a node can extract from overhearing transmissions of other nodes is called ``social information'', contrasting with the ``sensorial information'' that is obtained from the sensed signal $S_n$.

Without loss of generality, nodes transmit their signals sequentially according to their indices (i.e., node 1 transmits first, then node 2, etc.). It is assumed that this sequence is randomly chosen, and can be changed by the network operator at any time (c.f. Section~\ref{sec:at-df}). In general the broadcasted signals $X_1,\dots,X_{n-1}$ might not be directly observable by the $n$-th agent because of various observational restrictions. Therefore, the social observations obtained by the $n$-th node are represented by $\boldsymbol{G}_n\in\mathcal{G}_n$, which can be a random scalar, vector, matrix or other mathematical object. Some cases of interest are:
\begin{itemize}
\item[(i)] The $k$ previous decisions: $\boldsymbol{G}_n = (X_{n-k+1},\dots,X_{n-1})$.
\item[(ii)] The average value of the the previous decisions: $\boldsymbol{G}_n=\frac{1}{n-1} \sum_{k=1}^{n-1} X_k$.
\item[(iii)] The decisions of agents connected by an Erdos-Renyi stochastic network with parameter $\xi\in[0,1]$, i.e. $\boldsymbol{G}_n=(Z_1,\dots,Z_{n-1}) \in\{0,1,e\}^{n-1}$, where
\begin{equation}
Z_k = \begin{cases}
X_k \qquad &\text{with probability }\xi, \\
e    \qquad &\text{with probability } 1-\xi.\end{cases}
\end{equation}
\end{itemize}
Please note that in the last example the Erdos-Renyi model has only been used as an illustrative example, and it can be easily generalized to consider the topology of any stochastic network of interest.

In this work we study the social dynamics based on the properties of the transition probability from states $\bs{g'}\in\mathcal{G}_{n-1}$ to $\bs{g}\in\mathcal{G}_{n}$, as given by the conditional probabilities
\begin{equation}
\beta_w^n(\bs{g} | x_n,\bs{g'}) := \Pw{w}{\bs{G}_n= \bs{g} | X_{n-1}=x_{n-1}, \bs{G}_{n-1}= \bs{g'}},
\end{equation}
where $x_n\in\{0,1\}$. We also assume that the social dynamics is causal, meaning that $\bs{G}_n$ is conditionally independent of $S_m$ given $W$ for all $m\geq n$.

\subsection{The network operator and the adversary}
\label{sec:at-df}

The network is managed by a network operator, who is an external agent that uses the network as a tool to build an estimate of $W$. The network operator is opposed by an adversary, whose goal is to disrupt the inference capabilities of the network. For this aim, the adversary controls a group of authenticated Byzantine nodes without being noticed by the network operator, which have been captured by malware through cyber/wireless means, or by physical substitution.

The overall performance of a network of $N$ nodes is defined by the accuracy of the inference of the $N$-th node, which is the last one in the decision sequence. As the decision sequence is generated randomly by the network operator, every node of the network is equally likely to be at the end of the decision sequence. It is further assumed that the adversary has no knowledge of the decision sequence, as it can be chosen at run-time and changed regularly. 
As the adversary has no reason for capturing any particular node in the network, it is hence reasonable to assume that the adversary captures nodes randomly. Therefore, the Byzantine nodes are considered to be uniformly distributed over the network.

For simplicity, we model the strength of the attack with a single parameter $p_b$, which corresponds to the probability of a node of being compromised\footnote{This attack model assumes implicitly that the capture of each node is an independent event. Extensions considering cyber-infection propagation properties are possible (c.f. \cite{karyotis2016malware}), being left for future studies.}. Moreover, we assume that the capture probability does not depend on $W$\footnote{If the capture ratio would be higher when $W=1$, then detecting Byzantine nodes would provide additional evidence to detect attacks. As this would go against the adversary's interest, we discard this possibility.}. Hence, the number of Byzantine nodes, denoted by $N^*$, is a Binomial random variable with $\E{N^*} = p_bN$. Due to the law of large numbers, $N^*\approx p_bN$ for large networks; and hence, $p_b$ is also the ratio of expected Byzantine nodes in the network, which is the traditional metric for attack strength used in the literature.

For enabling the data processing and forwarding, the network operator defines a \emph{strategy}, i.e. a data fusion scheme given by a collection of (possibly stochastic) functions $\{\pi_n\}_{n=1}^\infty$ such that $\pi_n:\mathcal{S}\times \mathcal{G}_n \to \{0,1\}$ for all $n\in\mathbb{N}$. On the other hand, the adversary can freely set the values of the binary signals transmitted by Byzantine nodes. This is modeled by a random function $C: \{0,1\}\to\{0,1\}$ that corrupts the node's broadcasted signal. Therefore, the broadcasted signal of the $n$-th node is given by
\begin{equation}
X_n = 
\begin{cases}
C(\pi_n(S_n,\boldsymbol{G}_n)) \qquad & \text{with probability }p_b\text{, and} \\
\pi_n(S_n,\boldsymbol{G}_n) \qquad & \text{otherwise.}
\end{cases}
\end{equation}
Furthermore, as broadcasted signals are binary, the corruption function $C(\cdot)$ can be characterized by the conditional probabilities $c_{0|0}$ and $c_{0|1}$, where  $c_{i|j} = \P{ C(\pi) = i | \pi = j }$.

The rest of this work focuses on the case in which the network operator can deduce the corruption function and can estimate the capture risk $p_b$. The average network miss-detection and false alarm rates for an attack of intensity $p_b$ are defined as
\begin{align}
\P{\text{MD};p_b} &:= \Pw{1}{ \pi_N(S_N,\bs{G}_N) = 0 } , \qquad \text{and} \label{eq:MD} \\
\P{\text{FA};p_b}  &:= \Pw{0}{ \pi_N(S_N,\bs{G}_N) = 1 }, \label{eq:FA}
\end{align}
respectively (note that $p_b$ implicitly affects the distribution of $\bs{G}_N$). The case in which these quantities are unknown can be addressed using the current framework with a min-max analysis, which is left for future studies.

\subsection{Problem statement}
\label{qweqw213123}

Our goal is to develop a resilient strategy, in order to provide a reliable estimation of $W$ even under a significant number of unidentified Byzantine nodes. Note that in most surveillance applications miss-detections are more important than false alarms, being difficult to estimate the cost of the worst-case scenario. Therefore, the average network performance is evaluated following the Neyman-Pearson criteria, by setting an allowable false alarm rate $\alpha$ and focusing on reducing the miss-detection rate~\cite{poor2013introduction}. By denoting by $\mathcal{P}$ the set of all strategies, we look to the following optimization problem:
\begin{equation}
\begin{aligned}
&\underset{\{\pi_n\}_{n=1}^\infty \in \mathcal{P}}{\text{minimize}}
& & \P{\text{MD};p_b} \\
& \text{subject to}
& & \P{\text{FP};p_b} \leq \alpha.
\end{aligned}
\end{equation}
However, finding an optimal solution is a formidable challenge, even for the simple case of networks with start topology and no Byzantine attacks~(see \cite{chamberland2007wireless, smith2011network} and references therein). Therefore, our aim is to develop a sub-optimal strategy that enables resilience, while being suitable for implementation in sensor nodes with limited computational power.

\section{Social learning as a data aggregation scheme}
\label{sec:III}

This section describes our proposed data fusion scheme, and explains its functions against topology-aware data falsification attacks. In the sequel, Section~\ref{secA} describes and analyses the data fusion rule, then Section~\ref{sec:statistics} derives basic properties of it statistics, and finally Section~\ref{sec':alg} presents a practical algorithm for its implementation.

\subsection{Data fusion rule}
\label{secA}


Let us assume that each sensor node is a rational agent, who tries to maximizes the profit of an inference within a social network. Rational agents follow \textit{Bayesian strategies}\footnote{Although Bayesian models are elegant and tractable, they assume agents act always rationally~\cite{shiller1995conversation} and make strong assumptions on the knowledge agents have about posterior probabilities~\cite{jadbabaie2012non}. However, Bayesian models provide an important benchmark, not necessarily due to their accuracy but because they give an important reference point with which other models can be compared~\cite{acemoglu2011opinion}.}, which can be elegantly described by the following decision rule~\cite[Ch. 2]{poor2013introduction}:
\begin{equation}\label{eq:bayes}
\frac{ \P{W = 1|S_n,\boldsymbol{G}_{n}} }{ \P{W = 0|S_n,\boldsymbol{G}_{n}} }
\mathop{\lessgtr}_{\pi_n=1}^{\pi_n=0}
\frac{ u(0,0) - u(1,0)  }{ u(1,1) - u(0,1) }
\enspace.
\end{equation}
Above, $u(x,w)$ is a cost assigned to the decision $X_n=x$ when $W=w$, which can be engineered in order to match the relevance of miss-detections and false alarms~\cite{poor2013introduction}. 

Let us find a simpler expression for the decision rule \eqref{eq:bayes}. Due to the causality constrain (c.f. Section~\ref{sec:social_info}), $\bs{G}_n$ can only be influenced by $S_1,\dots,S_{n-1}$, and therefore is conditionally independent of $S_n$ given $W$. Using this conditional independence condition, one can find that
\begin{equation}\label{eq:bayes2}
\frac{ \P{W = 1|S_n,\boldsymbol{G}_{n}} }{ \P{W = 0|S_n,\boldsymbol{G}_{n}} } 
= e^{\Lambda_S(S_n) + \Lambda_{\boldsymbol{G}_{n}}(\boldsymbol{G}_{n})},
\end{equation}
where $\Lambda_{\boldsymbol{G}_{n}}(\boldsymbol{G}_{n}) $ is the log-likelihood ratio of $\boldsymbol{G}_{n}$. Then, using \eqref{eq:bayes2} one can re-write \eqref{eq:bayes} as
\begin{equation}\label{eq:asdasdA3}
\Lambda_S(S_n) + \Lambda_{\boldsymbol{G}_n}(\boldsymbol{G}_n) \mathop{\lessgtr}_{\pi_n=1}^{\pi_n=0} \tau_0
\enspace,
\end{equation}
where $\tau_0 =  \log \frac{ \P{W=0} }{ \P{W=1} } + \log \frac{ u(0,0) - u(1,0) }{ u(1,1) - u(0,1)  }$. In simple words, \eqref{eq:asdasdA3} states how the the $n$-th node should fuse the private and social knowledge: the evidence is provided by the corresponding log-likelihood terms, which are then simply added and then compared against a fixed threshold\footnote{As the prior distribution of $W$ is usually unknown, $\tau_0$ is a free parameter of the scheme. Following the discussion in Section~\ref{qweqw213123}, the network operator shall select the lowest value of $\tau$ that satisfies the required false alarm rate given by the Neyman-Pearson criteria.}.

Further understanding of the above decision rule can be attained by studying it from the point of view of communication theory~\cite{rosas2017technological}. We first note that the decision is made not over the full raw signal $S_n$ but over the ``decision signal'' $\Lambda_S(S_n)$, which is a processed version of it. Interestingly, this processing might serve for dimensionality reduction, as even though $S_n$ can be a matrix or a high-dimensional vector $\Lambda_S(S_n)$ is always a single number. Due to their construction and the underlying assumptions over $S_n$ (c.f. Section~\ref{sec:social_info}), the variables $\Lambda_S(S_n)$ are identically distributed and conditionally independent given $W=w$. Moreover, by introducing the shorthand notation $\tau_n (\boldsymbol{G}_n) = \tau_0 - \Lambda_{\boldsymbol{G}_n}(\boldsymbol{G}_n)$, one can re-write \eqref{eq:asdasdA3} as
\begin{equation}\label{eq:asdasdA3}
\Lambda_S(S_n) \mathop{\lessgtr}_{\pi_n=1}^{\pi_n=0} \tau_n(\boldsymbol{G}_n)
\enspace.
\end{equation}
Therefore, the decision is made by comparing the decision signal with a decision threshold $\tau_n$. Note that this represents a comparison between the sensed data, summarised by $\Lambda_S(S_n)$, and the social information carried by $\tau_n(\boldsymbol{G}_n)$.

\subsection{Decision statistics}\label{sec:statistics}

Let us find expressions for the probabilities of the actions of the $n$-th agent, first focusing on the case $n=1$. Note that
\begin{equation}
\Pw{w}{\pi_1(S_1) = 0 } = \Pw{w}{ \Lambda_S(S_1) < \tau_0 } = F_w^\Lambda(\tau_0)
\end{equation}
where $F_w^\Lambda(\cdot)$ is the c.d.f. of $\Lambda_S$ conditioned on $W=w$. 
Then, considering the possibility that the first node could be a Byzantine, one can show that
\begin{align}\label{eq:asqwt}
\Pw{w}{ X_1 =0 } &= p_b \Pw{w}{ X_1 = 0|\text{ \small Byzantine}} + (1-p_b)\Pw{w}{X_1=0 | \text{\small not a Byzantine} } \nonumber \\
&= p_b( c_{0|0} F_w^\Lambda(\tau_0) + c_{0|1} [ 1 - F_w^\Lambda(\tau_0) ] )  + (1-p_b) F_w^\Lambda(\tau_0) \\
&=z_0 + z_1 F_w^\Lambda(\tau_0)
\enspace,
\end{align}
where we are introducing $z_0:= p_bc_{0|1}$ and $z_1:= 1 - p_b(1-c_{0|0} + c_{0|1} )$ as short-hand notation, which are non-negative constants that summarize the strength of the adversary. In particular, when the adversary is powerless then $z_0=0$ and $z_1 = 1$ and hence $\Pw{w}{\pi_1(S_1)=0} = \Pw{w}{X_1=0}$. 

By considering the $n$-th node, one can find that
\begin{align}
\Pw{w}{\pi_n(S_n,\boldsymbol{G}_n) = 0 |\boldsymbol{G}_n=\boldsymbol{g}_n} &= \int_\mathcal{S} 
\Pw{w}{ \pi_n(s_n,\boldsymbol{g}_n) = 0 | S_n=s } \mu_w(s) \text{d} s  \nonumber\\
&= \int_\mathcal{S} \I{ \pi_{n} (\boldsymbol{g}_{n}, s) = 0 } \mu_w(s) \text{d} s \\
&= \Pw{w}{ \Lambda_S(s) < \tau_{n}(\boldsymbol{g}_{n}) }\\
&= F_w^\Lambda(\tau_{n}(\boldsymbol{g}_{n}))  \label{eq:tauuuw}
\enspace.
\end{align}
The first equality is a consequence of the fact that $S_n$ is conditionally independent of $\boldsymbol{G}_n$ given $W=w$, while the second equality is a consequence that $X_n$ can be expressed as a deterministic function of $\boldsymbol{G}_{n}$ and $S_n$, and hence becomes conditionally independent of $W$. Note that \eqref{eq:tauuuw} shows that $\tau_n$ is a sufficient statistic for predicting $X_n$ with respect to $\boldsymbol{G}_{n}$. Note that $F_w^\Lambda(x)$ can be directly computed from the statistics of the signal distribution (its properties are explored in Appendix~\ref{sec:appendix_sig}). Moreover, using \eqref{eq:tauuuw} and following a similar derivation as in \eqref{eq:asqwt}, one can conclude that 
\begin{equation}\label{adasdh5455}
\Pw{w}{ X_{n} = 0 | \boldsymbol{G}_{n}= \boldsymbol{g}_{n}} = z_0 + z_1 F_w^\Lambda(\tau_n(\boldsymbol{g}_n) ) .
\end{equation}

Let us now study the statistics of $\boldsymbol{G}_n$. 
By using the definition of the transition coefficients $\beta_w^n(\boldsymbol{g}_{n+1}|x_n,\boldsymbol{g}_{n})$, one can find that
\begin{equation}\label{qwe"123123}
\Pw{w}{\boldsymbol{G}_{n+1}=\boldsymbol{g}_{n+1}} = \sum_{\boldsymbol{g}_{n} \in \mathcal{G}_{n}} \sum_{x_n\in\{0,1\}} \beta_w^n(\boldsymbol{g}_{n+1}|x_{n},\boldsymbol{g}_{n}) \Pw{w}{ X_{n} = x_n, \boldsymbol{G}_{n} = \boldsymbol{g}_{n}}.
\end{equation}
Note that, using the above derivations, the terms $\Pw{w}{ X_{n} = x_n, \boldsymbol{G}_{n} = \boldsymbol{g}_{n}}$  can be further expressed as
\begin{align}
\Pw{w}{ X_{n} = x_{n}, \boldsymbol{G}_{n}=\boldsymbol{g}_n} &= \Pw{w}{ X_n=x_{n}| \boldsymbol{G}_n=\boldsymbol{g}_n} \Pw{w}{ \boldsymbol{G}_n=\boldsymbol{g}_n } \\
&= \lambda( z_0 + z_1 F_w^\Lambda(\tau_n(\boldsymbol{g}_n)),x_n) \Pw{w}{ \boldsymbol{G}_n=\boldsymbol{g}_n },
\end{align}
where $\lambda(p,x) = x (1-p) + (1-x) p$. Therefore, a closed form expression can be found for \eqref{qwe"123123} recursively over $\boldsymbol{G}_n$.

\subsection{An algorithm for computing the social log-likelihood}
\label{sec':alg}

The main challenge for implementing \eqref{eq:asdasdA3} as a data processing method in a sensor node is to have an efficient algorithm for computing $\tau_n(\boldsymbol{g}_n)$. Leveraging the above derivations, we develop Algorithm 1 as a iterative procedure for computing $\tau_n$. 
\begin{algorithm*}
\caption{Computation of the decision threshold}\label{alg:social}
\begin{algorithmic}[1]
\Function{Compute\_Tau}{$N,F_0^\Lambda(\cdot),F_1^\Lambda(\cdot),\beta_w^n(\cdot|\cdot,\cdot), \tau_0, z_0,z_1$}
\State $\tau_1 = \tau_0$
\For{$x_1 \in \{0,1\}$}
\State $\Pw{0}{X_1=x_1,\boldsymbol{G}_1=0} = \lambda(z_0 + z_1 F_0^\Lambda(\tau_1),x_1)$
\State $\Pw{1}{X_1=x_1,\boldsymbol{G}_1=0} = \lambda(z_0 + z_1 F_1^\Lambda(\tau_1),x_1)$
\EndFor
\For{$n=1,\dots,N-1$}
\For{$\forall \boldsymbol{g} \in\mathcal{G}_{n+1}$}
\State $\Pw{0}{\boldsymbol{G}_{n+1}=\boldsymbol{g}} = \sum_{\boldsymbol{g}_{n} \in \mathcal{G}_{n}}\sum_{x_{n}=\{0,1\}} \beta_w^n(\boldsymbol{g}_{n+1}|x_{n},\boldsymbol{g}_{n}) \Pw{0}{X_{n}=x_n,\boldsymbol{G}_{n} = \boldsymbol{g}_{n}}$
\State $\Pw{1}{\boldsymbol{G}_{n+1}=\boldsymbol{g}} = \sum_{\boldsymbol{g}_{n} \in \mathcal{G}_{n}}\sum_{x_{n}=\{0,1\}} \beta_w^n(\boldsymbol{g}_{n+1}|x_{n},\boldsymbol{g}_{n}) \Pw{1}{X_{n}=x_n,\boldsymbol{G}_{n} = \boldsymbol{g}_{n}}$
\State $\Lambda_{\boldsymbol{G}_n}(\boldsymbol{g}) = \log \frac{\Pw{1}{\boldsymbol{G}_n=\boldsymbol{g}}}{\Pw{0}{\boldsymbol{G}_n=\boldsymbol{g}}} $
\State $\tau_n(\boldsymbol{g}) = \nu+\eta - \Lambda_{\boldsymbol{G}_n}(\boldsymbol{g})$
\For{$x_{n+1} \in \{0,1\}$}
\State $\Pw{0}{ X_{n+1} = x_{n+1} , \boldsymbol{G}_{n+1}  = \boldsymbol{g} } = \lambda(z_0+z_1 F_0^\Lambda(\tau_n(\boldsymbol{g}_n)),x_{n+1}) \Pw{0}{\boldsymbol{G}_{n+1}  = \boldsymbol{g} } $
\State $\Pw{1}{ X_{n+1} = x_{n+1} , \boldsymbol{G}_{n+1}  = \boldsymbol{g} } = \lambda(z_0 + z_1F_1^\Lambda(\tau_n(\boldsymbol{g}_n)),x_{n+1}) \Pw{1}{\boldsymbol{G}_{n+1}  = \boldsymbol{g} } $
\EndFor
\EndFor
\EndFor
\State \Return $\tau_N(\cdot)$
\EndFunction
\end{algorithmic}
\end{algorithm*}

In many cases of interest the algorithm's complexity scales gracefully. For the particular case of nodes with memory of length $k$ (i.e. $\boldsymbol{G}_n=(X_{n-k-1},\dots,X_{n-1})$), the algorithmic complexity of Algorithm~1 is $\mathcal{O}( 2^k N)$, and therefore grows linearly with the size of the network, while being limited in the values of $k$ that can consider. In general, the algorithm complexity scales linearly with $N$ as long as the cardinality of $\mathcal{G}_n$ are bounded, or if a significant portion of the terms $\beta_w^n(\boldsymbol{g}_{n+1} | x_n,\boldsymbol{g}_n)$ are zero.

The inputs that drive Algorithm~1 can be classified in two groups. First, the terms $N,F_0^\Lambda(\cdot),F_1^\Lambda(\cdot),\beta_w^n(\cdot|\cdot,\cdot)$ are properties of the network (position of the node within the decision sequence, sensor statistics and social observability, respectively) that the network operator could measure. On the other hand, $\tau_0, z_0,z_1$ are properties of the adversary profile that depend on the prior statistics of $W$, $p_b$ and the corruption function defined by $c_{0|0}$ and $c_{0|1}$ (c.f. Section~\ref{sec:at-df}). In most scenarios the knowledge of the network controller about these quantities is limited, as attacks are rare and might follow unpredictable patters. Limited knowledge can still be exploited using e.g. Bayesian estimation techniques~\cite{gelman2014bayesian}. If no knowledge is available for the network controller, then these quantities can be considered free parameters of the strategy that span a range of alternative balances between miss-detections and false positives, i.e. a receiver operating characteristic (ROC) space.

\section{Information cascade}
\label{sec:cascades}

The term ``social learning'' refers to the fact that $\pi_n(S_n,\boldsymbol{G}_n)$ becomes a better predictor of $W$ as $n$ grows, and hence larger networks tend to develop a more accurate inference. However, as the number of shared signals grows, the corresponding ``social pressure'' can make nodes to ignore their individual measurements to blindly follow the dominant choice, triggering a cascade of homogeneous behaviour. It is our interest to clarify the role of the social pressure in the decision making of the agents involved in a social network, as information cascades can introduce severe limitations in the asymptotic performance of social learning~\cite{acemoglu2011bayesian}.

Moreover, an adversary can leverage the information cascade phenomenon. In effect, if the number of Byzantine nodes $N^*$ is large enough then a misleading information cascade can be triggered almost surely, making the learning process to fail. However, if $N^*$ is not enough then the network may undo the pool of wrong opinions and end up triggering a correct cascade.

In the sequel, the effect of information cascades is first studied in individual nodes in Section~\ref{sec:local}. Then, the propagation properties of cascades is explored in Section~\ref{sec:5a}.

\subsection{Local information cascades}
\label{sec:local}

In general, the decision $\pi_n(S_n,\boldsymbol{G}_n)$ is made based in the evidence provided by both $S_n$ and $\boldsymbol{G}_{n}$. A \emph{local cascade} takes place in the $n$-th agent when the information conveyed by $S_n$ is ignored in the decision-making process due to a dominant influence of $\boldsymbol{G}_n$. We use the term ``local'' to emphasize that this event is related to the data fusion of an individual agent. This idea is formalized in the following definition using the notion of conditional mutual information ~\cite{cover2012elements}, denoted as $I(\cdot;\cdot|\cdot)$.

\begin{definition}
\label{local}
The social information $\boldsymbol{g}_{n} \in \mathcal{G}_n$ generates a \emph{local information cascade} for the $n$-th agent if $I(\pi_n;S_n|\boldsymbol{G}_n = \boldsymbol{g}_n) = 0$.
\end{definition}

The above condition summarizes two possibilities: either $pi_n$ is a deterministic function of $\boldsymbol{G}_n$ and hence there is no variability in $\pi_n$ after $\boldsymbol{G}_n$ has been determined, or there is still variability (i.e. $\pi_n$ is a stochastic strategy) but it is conditionally independent of $S_n$. In both cases, the above formulation highlights the fact that the decision $\pi_n$ contains no information coming from $S_n$\footnote{Recall that $S_n$ and $\boldsymbol{G}_n$ are conditionally independent given $W=w$ (c.f. Section~\ref{secA}), and hence there cannot be redundant information about $W$ that is conveyed by $S_n$ and also $\boldsymbol{G}_n$. For a more detailed discussion about redundant information c.f. \cite{rosas2016understanding}.}. 

\begin{lemma}
The variables $\boldsymbol{G}_n \rightarrow \tau_n \rightarrow \pi_n$ form a Markov Chain (i.e. $\tau_n$ is a sufficient statistic of $\boldsymbol{G}_n$ for predicting the decision $\pi_n$).
\end{lemma}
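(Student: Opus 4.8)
The plan is to unpack the Markov-chain claim into its operational meaning and then verify it by a short computation. The starting observation is that $\tau_n=\tau_0-\Lambda_{\boldsymbol{G}_n}(\boldsymbol{G}_n)$ is a deterministic function of $\boldsymbol{G}_n$, so conditioning on the pair $(\boldsymbol{G}_n,\tau_n)$ is the same as conditioning on $\boldsymbol{G}_n$ alone. Consequently the chain $\boldsymbol{G}_n\rightarrow\tau_n\rightarrow\pi_n$ is equivalent to the conditional independence $\pi_n\perp\boldsymbol{G}_n\mid\tau_n$, which (since $\pi_n$ is binary) holds precisely when the scalar map $\boldsymbol{g}_n\mapsto\P{\pi_n=0|\boldsymbol{G}_n=\boldsymbol{g}_n}$ factors through $\tau_n(\boldsymbol{g}_n)$, i.e.\ depends on $\boldsymbol{g}_n$ only via the value $\tau_n(\boldsymbol{g}_n)$. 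My first step would be to state this reduction explicitly and then establish that factorization.

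To evaluate the conditional law I would condition on the hidden hypothesis $W$ and average over the two alternatives, obtaining
\begin{equation}
\P{\pi_n=0|\boldsymbol{G}_n=\boldsymbol{g}_n}=\sum_{w\in\{0,1\}}\P{W=w|\boldsymbol{G}_n=\boldsymbol{g}_n}\,\Pw{w}{\pi_n=0|\boldsymbol{G}_n=\boldsymbol{g}_n}.
\end{equation}
The inner factor is handled immediately by \eqref{eq:tauuuw}, which gives $\Pw{w}{\pi_n=0|\boldsymbol{G}_n=\boldsymbol{g}_n}=F_w^\Lambda(\tau_n(\boldsymbol{g}_n))$, already a function of $\tau_n(\boldsymbol{g}_n)$ alone.

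The remaining work concerns the posterior weights $\P{W=w|\boldsymbol{G}_n=\boldsymbol{g}_n}$, which a priori could depend on the entire record $\boldsymbol{g}_n$. Here I would apply Bayes' rule together with the definition of the social log-likelihood ratio to write the posterior log-odds as $\log\frac{\P{W=1|\boldsymbol{g}_n}}{\P{W=0|\boldsymbol{g}_n}}=\Lambda_{\boldsymbol{G}_n}(\boldsymbol{g}_n)+\log\frac{\P{W=1}}{\P{W=0}}$. Since $\Lambda_{\boldsymbol{G}_n}(\boldsymbol{g}_n)=\tau_0-\tau_n(\boldsymbol{g}_n)$ and $\tau_0$ is a constant, the posterior weight is a fixed (logistic) function of $\tau_n(\boldsymbol{g}_n)$; equivalently, $\tau_n$ is a sufficient statistic of $\boldsymbol{G}_n$ for the binary hypothesis $W$. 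Substituting both factors back shows that the whole sum depends on $\boldsymbol{g}_n$ only through $\tau_n(\boldsymbol{g}_n)$, which is exactly the factorization sought, and the lemma follows.

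The step I expect to be the crux is the treatment of the posterior weights, and the reason is that the lemma is an \emph{unconditional} statement while the clean identity \eqref{eq:tauuuw} lives under a fixed hypothesis $W=w$. One cannot simply quote \eqref{eq:tauuuw} and be done: the decision $\pi_n$ and the social observation $\boldsymbol{G}_n$ are genuinely dependent when $W$ is not fixed, being coupled through the common cause $W$. The real content of the lemma is that this coupling is mediated entirely by $\tau_n$, and making that precise requires recognising $\tau_n$ (equivalently $\Lambda_{\boldsymbol{G}_n}$) in its second role as the likelihood-ratio statistic that captures all the information $\boldsymbol{G}_n$ carries about $W$. Once both conditional factors are seen to collapse onto $\tau_n$, the argument closes.
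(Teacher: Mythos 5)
Your proof is correct, and it takes a genuinely different---and in fact more complete---route than the paper's. The paper's proof is a one-line computation carried out entirely under the conditional measure $\Pw{w}{\cdot}$: quoting \eqref{eq:tauuuw}, it writes $\Pw{w}{\pi_n \mid \tau_n, \boldsymbol{G}_n} = \lambda\left(F_w^\Lambda(\tau_n),\pi_n\right) = \Pw{w}{\pi_n \mid \tau_n}$ and declares the conditional independence ``clear.'' That argument establishes the Markov property \emph{within each hypothesis}, i.e.\ $\pi_n$ independent of $\boldsymbol{G}_n$ given $(\tau_n, W=w)$, but it never touches the mixture over $W$---and, as you correctly identify as the crux, conditional independence given $(\tau_n,W)$ does not in general imply conditional independence given $\tau_n$ alone, since $\pi_n$ and $\boldsymbol{G}_n$ remain coupled through the common cause $W$. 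Your second step---Bayes' rule plus the identity $\Lambda_{\boldsymbol{G}_n}(\boldsymbol{g}_n) = \tau_0 - \tau_n(\boldsymbol{g}_n)$, showing that the posterior weights $\P{W=w \mid \boldsymbol{G}_n = \boldsymbol{g}_n}$ are themselves fixed functions of $\tau_n(\boldsymbol{g}_n)$---is precisely the ingredient needed to de-condition, and it has no counterpart in the paper. So your argument proves the unconditional reading of the lemma, whereas the paper's argument, taken literally, proves only the $W$-conditional version (which may well be the reading the authors intend, since every subsequent use of the lemma occurs inside $\Pw{w}{\cdot}$). The only price you pay is that your proof requires a prior on $W$ to exist so that the unconditional law and the posterior weights are well defined; the paper's per-hypothesis statement needs no prior, which is consistent with its remark that the prior of $W$ is typically unknown.
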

\begin{proof}
Using \eqref{eq:tauuuw} one can find that
$$
 \Pw{w}{\pi_n|\tau_n,\boldsymbol{G}_n} = \lambda (F_w^\Lambda(\tau_n),X_n) =  \Pw{w}{\pi_n|\tau_n},
$$
and therefore the conditional independency of $\pi_n$ and $\boldsymbol{G}_n$ given $\tau_n$ is clear.
\end{proof}

Let us now introduce the notation $U_s = \esssup_{s\in\mathcal{S}} \Lambda_S(S_n=s)$ and $L_s = \essinf_{s\in\mathcal{S}} \Lambda_S(S_n=s)$ for the essential supermum and infimum of $\Lambda_S(S_n)$, being the signals within $\mathcal{S}$ that most strongly support the hypothesis $\{W=1\}$ over $\{W=0\}$ and viceversa\footnote{The essential supremum is the smallest upper bound over $\Lambda_S(S_n)$ that holds almost surely, being the natural measure-theoretic extension of the notion of supremum \cite{dieudonne1976}.}. If one of these quantities diverge, this would imply that there are signals $s\in\mathcal{S}$ that provide overwhelming evidence in favour of one of the competing hypotheses. If both are finite then the agents are said to have \textit{bounded beliefs}~\cite{acemoglu2011bayesian}. As sensory signals of electronic devices are ultimately processed digitally, the number of different signals that an agent can obtain are finite and hence their supremum is always finite. Therefore, in the sequel we asume that both $L_s$ and $U_s$ are finite. Using these notions, the following proposition provides a characterization for local information cascades.

\begin{proposition}\label{prop:local}
The social information $\boldsymbol{g}_{n} \in \mathcal{G}_n$ triggers a local information cascade if and only if the agents have bounded beliefs and $\tau_n(\boldsymbol{g}_{n}) \notin [L_s,U_s]$.
\end{proposition}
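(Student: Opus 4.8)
The plan is to translate the information-theoretic cascade condition of Definition~\ref{local} into a purely geometric statement about where the threshold $\tau_n(\bs{g}_n)$ sits relative to the essential range $[L_s,U_s]$ of the decision signal $\Lambda_S(S_n)$. The two ingredients I would lean on are the preceding lemma (which makes $\tau_n$ a sufficient statistic, so that $\pi_n$ depends on $\bs{G}_n$ only through $\tau_n$) and the threshold form of the decision rule \eqref{eq:asdasdA3}.

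First I would fix $\bs{G}_n=\bs{g}_n$ and observe that the action reduces to the deterministic threshold test $\pi_n=\I{\Lambda_S(S_n)\ge \tau_n(\bs{g}_n)}$, i.e.\ a measurable function of $S_n$ alone. Hence $H(\pi_n\mid S_n,\bs{G}_n=\bs{g}_n)=0$, and the conditional mutual information collapses to $I(\pi_n;S_n\mid \bs{G}_n=\bs{g}_n)=H(\pi_n\mid \bs{G}_n=\bs{g}_n)$. Since $\pi_n$ is binary, this entropy vanishes \emph{iff} $\pi_n$ is almost surely constant given $\bs{g}_n$; so by Definition~\ref{local} a local cascade is equivalent to a.s.\ constancy of $\pi_n$. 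Because $\mu_0$ and $\mu_1$ are mutually absolutely continuous (Section~\ref{sec:social_info}), a null set under one hypothesis is null under the other, so ``almost surely'' is unambiguous and I may reason with a single representative measure, say $\mu_0$.

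Next I would characterise a.s.\ constancy through the essential extrema. The action is a.s.\ equal to $1$ iff $\Lambda_S(S_n)\ge \tau_n$ holds $\mu_0$-a.s., which by $L_s=\essinf\Lambda_S$ is equivalent to $\tau_n\le L_s$; dually, it is a.s.\ equal to $0$ iff $\Lambda_S(S_n)<\tau_n$ holds $\mu_0$-a.s., equivalent to $\tau_n>U_s$ with $U_s=\esssup\Lambda_S$. Conversely, whenever $\tau_n$ lies strictly inside $[L_s,U_s]$, the defining property of the essential sup and inf yields $\mu_0(\Lambda_S<\tau_n)>0$ and $\mu_0(\Lambda_S\ge\tau_n)>0$ simultaneously, so both actions occur with positive probability and no cascade is possible. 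Collecting these implications gives that a cascade occurs iff $\tau_n(\bs{g}_n)\notin[L_s,U_s]$, and the finiteness of $L_s$ and $U_s$ is precisely what allows a finite threshold to escape the interval, which is why bounded beliefs enters as the companion condition of the proposition.

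The hard part will be the endpoint cases $\tau_n\in\{L_s,U_s\}$, whose classification depends both on the tie-breaking convention at $\Lambda_S(S_n)=\tau_n$ and on whether the essential extrema are attained with positive probability; indeed, no single deterministic convention makes the \emph{closed} interval exactly correct at both endpoints, so the statement is to be read up to this boundary. I would resolve this by invoking the standing assumption that sensory signals are ultimately digitised, so the range of $\Lambda_S$ is finite and $L_s,U_s$ are an attained minimum and maximum carrying positive mass; the boundary analysis then becomes clean and consistent with $[L_s,U_s]$. I should also note that the necessity of \emph{both} extrema being finite is inherited from the maintained bounded-beliefs assumption rather than forced by the cascade alone, so the equivalence is properly stated under that standing hypothesis.
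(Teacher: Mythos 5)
Your proposal is correct and follows essentially the same route as the paper's proof: both reduce the cascade condition $I(\pi_n;S_n|\boldsymbol{G}_n=\boldsymbol{g}_n)=0$ to almost-sure constancy of the threshold test and then characterize that constancy through $F_w^\Lambda$ and the essential extrema $L_s,U_s$, your entropy identity $I(\pi_n;S_n|\boldsymbol{G}_n=\boldsymbol{g}_n)=H(\pi_n|\boldsymbol{G}_n=\boldsymbol{g}_n)$ being only a tidier packaging of the paper's ``$\tau_n$ determines $\pi_n$ almost surely, hence $\pi_n$ and $S_n$ are conditionally independent'' step. Your endpoint discussion is in fact more scrupulous than the paper's own proof, which establishes the cascade for $\tau_n\notin[L_s,U_s]$ but argues the converse (by contrapositive) only on the open interval $L_s<\tau_n<U_s$, silently leaving the cases $\tau_n\in\{L_s,U_s\}$ unexamined.
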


\begin{proof}
Let us assume that the agents have bounded beliefs. From the definition of $F_w^\Lambda$, which is a cumulative density function, it is clear that if $\tau_n<L_s$ then $F_0^\Lambda(\tau_n) = F_1^\Lambda(\tau_n) = 0$, while if $\tau_n>U_s$ then $F_0^\Lambda(\tau_n) = F_1^\Lambda(\tau_n) = 1$. Therefore, if $\tau_n(\boldsymbol{g}_{n}) \notin[L_s,U_s]$ then, according to \eqref{eq:tauuuw}, it determines $\pi_n$ almost surely, making $\pi_n$ and $S_n$ conditionally independent. 

To prove the converse by contrapositive, let us assume that $L_s < \tau_n(\boldsymbol{g}_{n}) < U_s$. Using again \eqref{eq:tauuuw} and the definition of $U_s$ and $L_s$, one can conclude that this implies that $0 < \Pw{w}{\pi_n=0|\boldsymbol{G}_n } < 1$ for both $w\in\{0,1\}$. This, in turn, implies that the sets $\mathcal{S}^0(\tau) = \{ s\in\mathcal{S} | \Lambda_S(s) < \tau_n(\boldsymbol{G}_n \}$ and $\mathcal{S}^1(\tau) = \mathcal{S} - \mathcal{S}^0$ both have positive probability under $\mu_0$ and $\mu_1$, which in turn implies the existence of conditional interdependency between $\pi_n$ and $S_n$ in this case.
\end{proof}

Intutively, Proposition~\ref{prop:local} shows that a local information cascade happens when the social information goes above the most informative signal that could be sensed. Some consequences of this result are explored in the next section.

\subsection{Social information dynamics and global cascades}
\label{sec:5a}

It is of great interest to predict when a local information cascade could propagate across the network, disrupting the collective behaviour and hence affecting the network performance. The following definition captures how, during a ``global information cascade'', the shared signals $X_n$ do not convey anymore information from the corresponding sensor signals. 

\begin{definition}
\label{global}
The social information $\boldsymbol{g}_n\in\mathcal{G}_n$ triggers a \emph{global information cascade} if $I(X_m:S_m|\boldsymbol{G}_n = \boldsymbol{g}_n) = 0$ holds for all $m\geq n$.
\end{definition}

A global information cascade is a succession of local information cascades. As Proposition~\ref{prop:local} showed that agents are free from local cascades as long as $\tau_n\in [L_s,U_s]$, one can guess that global cascades are related to the dynamics of $\tau_n$. These dynamics are determined by the transitions of $\boldsymbol{G}_n$, which follows the behaviour dictated by the transition coefficients $\beta_w^n(\cdot|\cdot,\cdot)$. To further study the social information dynamics we introduce the following definitions.
\begin{definition} The collection $\{\boldsymbol{G}_n\}_{n=1}^\infty$ is said to have:
\begin{itemize}
\item[1.] \emph{strongly consistent transitions} if, for any $W=w$, $\boldsymbol{g}\in\mathcal{G}_n$ and $\boldsymbol{g'}\in\mathcal{G}_{n-1}$, $\beta_w^n( \boldsymbol{g}|1,\boldsymbol{g'} )>0$ implies $\tau_{n}(\boldsymbol{g}) \leq \tau_{n-1}(\boldsymbol{g'})$, while if $\beta_w^n(\boldsymbol{g}|0,\boldsymbol{g'})>0$ implies $\tau_{n}(\boldsymbol{g}) \geq \tau_{n-1}(\boldsymbol{g'})$. 
\item[2.] \emph{weakly consistent transitions} if, for all $\boldsymbol{g}\in\mathcal{G}_n$ and $\boldsymbol{g'}\in\mathcal{G}_{n-1}$, $\tau_{n-1}(\boldsymbol{g'}) \leq L_s$ and $\Pw{w}{\boldsymbol{G}_n=g|\boldsymbol{G}_{n-1}=\boldsymbol{g'}} >0$ implies $\tau_{n}(\boldsymbol{g}) \leq L_s$, while $\tau_{n-1}(\boldsymbol{g'}) \geq U_s$ and $\Pw{w}{\boldsymbol{G}_n=\boldsymbol{g}|\boldsymbol{G}_{n-1}=\boldsymbol{g'}} >0$ implies $\tau_{n}(\boldsymbol{g}) \geq U_s$ \footnote{Note that the condition $\Pw{w}{\boldsymbol{G}_n=\boldsymbol{g}|\boldsymbol{G}_{n-1}=\boldsymbol{g'}} >0$ is equivalent to ask either $\beta_w^n(\boldsymbol{g},|0, \boldsymbol{g'})$ or $\beta_w^n(\boldsymbol{g},|1, \boldsymbol{g'})$ to be strictly positive.}.
\end{itemize}
\end{definition}

Intuitively, strong consistency mean that the decision threshold evolves monotonically with respect to the broadcasted signals $X_n$. Correspondingly, weak consistency implies that $\tau_n$ cannot return into $[L_S,U_S]$ after going beyond it. Moreover, the adjectives ``strong'' and ``weak'' reflect the fact that weak consistency only takes place outside the boundaries of the signal likelihood, while the strong consistency affects all the decision space. Moreover, strong consistent transitions imply weak consisten transitions when there are no Byzantine nodes, as shown in the next Lemma\footnote{It is possible to build examples where weak consistency does not follow strong consistency when $p_b>0$.}. 

\begin{lemma} \label{lemma12123123}
Strong consistent transitions satisfy the weak consistency condition if $p_b=0$.
\end{lemma}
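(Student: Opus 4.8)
The plan is to exploit the collapse of the adversary constants when $p_b=0$. Since $z_0 = p_b c_{0|1} = 0$ and $z_1 = 1 - p_b(1-c_{0|0}+c_{0|1}) = 1$ in this regime, equation~\eqref{adasdh5455} reduces to $\Pw{w}{X_{n-1}=0 \mid \boldsymbol{G}_{n-1}=\boldsymbol{g'}} = F_w^\Lambda(\tau_{n-1}(\boldsymbol{g'}))$. The whole argument then rests on the same observation used in the proof of Proposition~\ref{prop:local}: once $\tau_{n-1}(\boldsymbol{g'})$ leaves the belief interval $[L_s,U_s]$, the transmitted symbol $X_{n-1}$ becomes deterministic. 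I would fix $w$ throughout and handle the two halves of the weak-consistency condition separately.

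For the lower half, suppose $\tau_{n-1}(\boldsymbol{g'}) \leq L_s$. Because $F_w^\Lambda$ is the conditional c.d.f. of $\Lambda_S$ and $L_s = \essinf \Lambda_S$, monotonicity gives $F_w^\Lambda(\tau_{n-1}(\boldsymbol{g'})) \leq \Pw{w}{\Lambda_S < L_s} = 0$, so the node transmits $X_{n-1}=1$ almost surely (conditioned on $\boldsymbol{g'}$ under $w$). I would then expand the one-step transition by the law of total probability, as noted in the footnote to the weak-consistency definition, writing $\Pw{w}{\boldsymbol{G}_n=\boldsymbol{g}\mid\boldsymbol{G}_{n-1}=\boldsymbol{g'}} = \sum_{x}\beta_w^n(\boldsymbol{g}\mid x,\boldsymbol{g'})\,\Pw{w}{X_{n-1}=x\mid\boldsymbol{G}_{n-1}=\boldsymbol{g'}}$. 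Only the $x=1$ term survives, so the hypothesis $\Pw{w}{\boldsymbol{G}_n=\boldsymbol{g}\mid\boldsymbol{G}_{n-1}=\boldsymbol{g'}}>0$ forces $\beta_w^n(\boldsymbol{g}\mid 1,\boldsymbol{g'})>0$. Strong consistency then yields $\tau_n(\boldsymbol{g}) \leq \tau_{n-1}(\boldsymbol{g'}) \leq L_s$, which is precisely the first implication of weak consistency.

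The upper half is symmetric: assuming $\tau_{n-1}(\boldsymbol{g'}) \geq U_s$ I would argue $F_w^\Lambda(\tau_{n-1}(\boldsymbol{g'}))=1$, so the node transmits $X_{n-1}=0$ almost surely, only the $x=0$ term of the marginalisation survives, the positivity hypothesis forces $\beta_w^n(\boldsymbol{g}\mid 0,\boldsymbol{g'})>0$, and strong consistency gives $\tau_n(\boldsymbol{g}) \geq \tau_{n-1}(\boldsymbol{g'}) \geq U_s$.

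The main obstacle I anticipate is the boundary behaviour of the c.d.f. at $U_s$. Because the paper defines $F_w^\Lambda(x) = \Pw{w}{\Lambda_S < x}$ with a \emph{strict} inequality, the identity $F_w^\Lambda(L_s)=0$ holds unconditionally, so the lower half is clean; but $F_w^\Lambda(U_s) = 1 - \Pw{w}{\Lambda_S = U_s}$ equals $1$ only when $\Lambda_S$ has no atom at its essential supremum. For continuous signals this is automatic, whereas for discrete (digital) signals the exact boundary $\tau_{n-1}(\boldsymbol{g'}) = U_s$ leaves the preceding node a positive probability of transmitting $X_{n-1}=1$, breaking the deterministic-transmission step. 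I would therefore make explicit the convention (or no-atom assumption) that renders the boundary benign, noting that the lower implication and the strict upper implication $\tau_{n-1}(\boldsymbol{g'}) > U_s$ hold regardless.
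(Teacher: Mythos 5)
Your proposal is correct and follows essentially the same route as the paper's own proof in Appendix~\ref{app:2}: with $p_b=0$ the adversary constants collapse ($z_0=0$, $z_1=1$), the transmitted symbol becomes deterministic once $\tau_{n-1}(\bs{g'})$ leaves $[L_s,U_s]$, the one-step transition probability reduces by total probability to a single $\beta_w^n$ term, and strong consistency then pins $\tau_n(\bs{g})$ on the correct side of the boundary. (Incidentally, the paper's displayed computation contains a typo: it concludes $\Pw{w}{X_{n-1}=1|\bs{G}_{n-1}=\bs{g'}} = 1-p_bc_{0|1}=0$, when with $p_b=0$ this quantity equals $1$; the subsequent step, which keeps only the $x_n=1$ term, makes clear that $=1$ is what is meant.)

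Your anticipated obstacle at the upper boundary is a genuine observation, and it is one the paper does not resolve: the published proof treats only the case $\tau_{n-1}(\bs{g'}) \leq L_s$ explicitly (which is clean for exactly the reason you give, since $F_w^\Lambda(x)=\Pw{w}{\Lambda_S < x}$ vanishes at $L_s$ unconditionally) and dismisses the case $\tau_{n-1}(\bs{g'}) \geq U_s$ as ``analogous.'' But when $\Lambda_S$ has an atom at its essential supremum --- the typical situation for the digital sensors the paper emphasizes --- one has $F_w^\Lambda(U_s) = 1 - \Pw{w}{\Lambda_S = U_s} < 1$ at the exact boundary $\tau_{n-1}(\bs{g'}) = U_s$, so $X_{n-1}=1$ retains positive probability, the $x=1$ term of the marginalisation survives, and strong consistency then only yields $\tau_n(\bs{g}) \leq U_s$ rather than the required $\tau_n(\bs{g}) \geq U_s$. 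So the symmetry invoked by the paper silently requires either a no-atom condition at $U_s$, a tie-breaking convention, or restricting the upper hypothesis to the strict inequality $\tau_{n-1}(\bs{g'}) > U_s$; your proposal, which makes this assumption explicit, is in this respect more careful than the paper's own argument rather than deficient relative to it.
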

\begin{proof}
See Appendix~\ref{app:2}.
\end{proof}

We show next that if the evolution of $\bs{G}_n$ becomes deterministic and 1-1 after leaving the interval $[L_s,U_s]$ (henceforth called \textit{weakly invertible transitions}), then it satisfies the weak consistency condition. 
\begin{lemma}\label{ffewafawef}
Weekly invertible transitions imply the weakly consistency condition. 
\end{lemma}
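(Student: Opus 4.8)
The plan is to establish a statement slightly stronger than required, namely that weak invertibility makes the decision threshold \emph{invariant} along any transition that begins outside $[L_s,U_s]$ (the regime in which a local cascade is already in force by Proposition~\ref{prop:local}): if $\tau_{n-1}(\boldsymbol{g'})\notin[L_s,U_s]$ and $\boldsymbol{g}\in\mathcal{G}_n$ is a reachable successor of $\boldsymbol{g'}$, then $\tau_n(\boldsymbol{g})=\tau_{n-1}(\boldsymbol{g'})$. An exact equality trivially preserves one-sided bounds, so $\tau_{n-1}(\boldsymbol{g'})\le L_s$ forces $\tau_n(\boldsymbol{g})\le L_s$ and $\tau_{n-1}(\boldsymbol{g'})\ge U_s$ forces $\tau_n(\boldsymbol{g})\ge U_s$; these are exactly the two clauses of the weak consistency condition. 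Hence it suffices to prove the single invariance identity.

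To do so, I would use the defining relation $\tau_m(\cdot)=\tau_0-\Lambda_{\boldsymbol{G}_m}(\cdot)$ and reduce the claim to showing that the social log-likelihood ratio is carried unchanged from $\boldsymbol{g'}$ to $\boldsymbol{g}$, i.e. $\Lambda_{\boldsymbol{G}_n}(\boldsymbol{g})=\Lambda_{\boldsymbol{G}_{n-1}}(\boldsymbol{g'})$. Since weak invertibility makes the transition out of $\boldsymbol{g'}$ deterministic, $\boldsymbol{g}$ is the unique image of $\boldsymbol{g'}$ and $\Pw{w}{\boldsymbol{G}_n=\boldsymbol{g}\mid\boldsymbol{G}_{n-1}=\boldsymbol{g'}}=1$ for both $w$; and since the induced map is $1$--$1$, the state $\boldsymbol{g}$ admits $\boldsymbol{g'}$ as its only predecessor. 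Substituting these two facts into the total-probability recursion \eqref{qwe"123123} collapses the double sum to a single surviving term, yielding
\begin{equation}
\Pw{w}{\boldsymbol{G}_n=\boldsymbol{g}}=\Pw{w}{\boldsymbol{G}_{n-1}=\boldsymbol{g'}}\quad\text{for }w\in\{0,1\}.
\end{equation}
Taking the logarithm of the ratio of the $w=1$ and $w=0$ instances gives $\Lambda_{\boldsymbol{G}_n}(\boldsymbol{g})=\Lambda_{\boldsymbol{G}_{n-1}}(\boldsymbol{g'})$, and therefore $\tau_n(\boldsymbol{g})=\tau_{n-1}(\boldsymbol{g'})$, closing the argument.

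The step I expect to be the main obstacle is justifying that the mass of $\boldsymbol{g}$ originates \emph{entirely} from $\boldsymbol{g'}$. Determinism supplies only the forward half of this --- that all of the mass of $\boldsymbol{g'}$ flows into $\boldsymbol{g}$ --- and by itself would give merely the inequality $\Pw{w}{\boldsymbol{G}_n=\boldsymbol{g}}\ge\Pw{w}{\boldsymbol{G}_{n-1}=\boldsymbol{g'}}$; it is the $1$--$1$ (invertibility) hypothesis that excludes any second predecessor and upgrades this to the exact equality on which the whole proof rests. I would therefore read ``weakly invertible'' as asserting that the transition kernel restricted to $\{\tau\le L_s\}\cup\{\tau\ge U_s\}$ is both single-valued and injective, so that the preimage of $\boldsymbol{g}$ is the singleton $\{\boldsymbol{g'}\}$. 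A minor technical caveat I would flag is the boundary: weak invertibility is phrased for trajectories that have \emph{left} the closed interval $[L_s,U_s]$, whereas weak consistency uses the non-strict bounds $\le L_s$ and $\ge U_s$; since the transport identity is an equality that does not see which side of an endpoint one sits on, the endpoints $\tau=L_s$ and $\tau=U_s$ are carried through by the same computation and cause no difficulty.
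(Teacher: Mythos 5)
Your proposal is correct and is essentially the paper's own argument (Appendix~C): both proofs establish the stronger invariance claim that, once $\tau$ leaves $[L_s,U_s]$, determinism sends all of the mass of $\boldsymbol{g'}$ into a single successor while the $1$--$1$ property excludes any other predecessor, so the total-probability recursion collapses to $\Pw{w}{\boldsymbol{G}_n=\boldsymbol{g}}=\Pw{w}{\boldsymbol{G}_{n-1}=\boldsymbol{g'}}$, whence $\Lambda_{\boldsymbol{G}_n}(\boldsymbol{g})=\Lambda_{\boldsymbol{G}_{n-1}}(\boldsymbol{g'})$, $\tau_n(\boldsymbol{g})=\tau_{n-1}(\boldsymbol{g'})$, and the one-sided bounds are trivially preserved. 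The only difference is cosmetic: the paper additionally invokes the local-cascade fact that $F_w^\Lambda(\tau_n(\boldsymbol{g}_0))\in\{0,1\}$ independently of $w$, a step that (as your leaner version shows) is redundant once the $1$--$1$ condition is read, as both you and the paper do, as forcing the successors for $x_n=0$ and $x_n=1$ to coincide.
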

\begin{proof}
See Appendix~\ref{app:3}.
\end{proof}

Now we present the main result of this section, which is the characterization of information cascades for the case of social information that follows weakly consistent transitions.

\begin{theorem}\label{teo2}
If the social information have weakly consistent transitions, then a global information cascade is triggered by each local information cascade.
\end{theorem}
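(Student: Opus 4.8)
The plan is to prove the theorem by induction on $m \geq n$, showing that once a local cascade is triggered at step $n$ (i.e., $\tau_n(\boldsymbol{g}_n) \notin [L_s, U_s]$ by Proposition~\ref{prop:local}), the threshold remains outside $[L_s, U_s]$ for all subsequent steps, so that every node $m \geq n$ experiences a local cascade. The equivalence between ``global cascade'' and ``a succession of local cascades'' is almost immediate from the definitions: Definition~\ref{global} requires $I(X_m; S_m | \boldsymbol{G}_n = \boldsymbol{g}_n) = 0$ for all $m \geq n$, and since $X_m$ is a (possibly corrupted) deterministic function of $\pi_m$, and $C(\cdot)$ acts independently of $S_m$, one has $I(X_m; S_m | \cdot) = 0$ whenever $I(\pi_m; S_m | \cdot) = 0$. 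So the core task reduces to showing each $\pi_m$ carries no information from $S_m$, which by Definition~\ref{local} and Proposition~\ref{prop:local} is exactly the statement $\tau_m \notin [L_s, U_s]$ (with bounded beliefs already assumed).

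First I would establish the base case: by Proposition~\ref{prop:local}, the hypothesis that $\boldsymbol{g}_n$ triggers a local cascade gives $\tau_n(\boldsymbol{g}_n) \notin [L_s, U_s]$, so without loss of generality assume $\tau_n(\boldsymbol{g}_n) \geq U_s$ (the case $\tau_n(\boldsymbol{g}_n) \leq L_s$ is symmetric). For the inductive step, I would condition on $\boldsymbol{G}_n = \boldsymbol{g}_n$ and track the reachable states: I want to show that any $\boldsymbol{g}_{m+1} \in \mathcal{G}_{m+1}$ reachable from $\boldsymbol{g}_m$ (i.e. with $\Pw{w}{\boldsymbol{G}_{m+1} = \boldsymbol{g}_{m+1} | \boldsymbol{G}_m = \boldsymbol{g}_m} > 0$) also satisfies $\tau_{m+1}(\boldsymbol{g}_{m+1}) \geq U_s$, given the inductive hypothesis $\tau_m(\boldsymbol{g}_m) \geq U_s$. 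This is precisely the content of the weak consistency condition (second clause of Definition~2): $\tau_{m}(\boldsymbol{g}_m) \geq U_s$ together with positive transition probability implies $\tau_{m+1}(\boldsymbol{g}_{m+1}) \geq U_s$. Thus weak consistency is exactly the ``no return'' property that propagates the cascade, and the induction closes immediately.

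The subtlety I would be careful about is that the claim must hold for \emph{all} states reachable with positive probability, not just a single trajectory; hence I phrase the induction as a statement over the entire reachable support $\{\boldsymbol{g}_m : \Pw{w}{\boldsymbol{G}_m = \boldsymbol{g}_m | \boldsymbol{G}_n = \boldsymbol{g}_n} > 0\}$ rather than over a deterministic sequence. I would unfold the one-step reachability using the Chapman--Kolmogorov-type recursion from \eqref{qwe"123123}: if $\boldsymbol{g}_{m+1}$ is reachable from $\boldsymbol{g}_n$ then it is reachable in one step from some reachable $\boldsymbol{g}_m$, and I apply the weak consistency clause to that pair. A second point worth stating explicitly is that weak consistency is quantified over both $W = 0$ and $W = 1$, which is what allows the conclusion to hold unconditionally (so that the mutual information, which averages over $W$, vanishes); I would note that the footnote after Definition~2 equates the positive-transition hypothesis to $\beta_w^n(\boldsymbol{g}_{m+1} | x_m, \boldsymbol{g}_m) > 0$ for some $x_m$, matching the form in which weak consistency is stated.

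The main obstacle, as I see it, is not any hard estimate but rather getting the bookkeeping of the induction right — specifically, ensuring the inductive hypothesis is the \emph{uniform} statement ``every reachable $\boldsymbol{g}_m$ has $\tau_m(\boldsymbol{g}_m) \geq U_s$'' and correctly invoking weak consistency at each transition. Once that framing is fixed, each step is a direct citation of Definition~2 and Proposition~\ref{prop:local}, with no real computation. The only place requiring slight care is confirming that $I(X_m; S_m | \boldsymbol{G}_n = \boldsymbol{g}_n) = 0$ follows from the per-state conclusion $\tau_m \notin [L_s, U_s]$ after averaging over the reachable values of $\boldsymbol{G}_m$; since conditional mutual information is an average of nonnegative per-state terms, and each such term vanishes by Proposition~\ref{prop:local}, the aggregate vanishes as well.
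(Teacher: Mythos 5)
Your proposal is correct and follows essentially the same route as the paper's own proof: Proposition~\ref{prop:local} converts the local cascade at node $n$ into the statement $\tau_n(\boldsymbol{g}_n)\notin[L_s,U_s]$, the weak consistency condition propagates this outside-the-interval property to every successor state reachable with positive probability, and a recursive (inductive) application of Proposition~\ref{prop:local} then yields a local cascade at every node $m\geq n$, hence a global cascade. Your write-up is in fact slightly more careful than the paper's on two bookkeeping points --- phrasing the induction uniformly over the whole reachable support of $\boldsymbol{G}_m$, and passing from $I(\pi_m;S_m|\cdot)=0$ to the $I(X_m;S_m|\cdot)=0$ required by Definition~\ref{global} via the corruption function --- but the underlying argument is the same.
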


\begin{proof}
Let us consider $\bs{g}_0\in\mathcal{G}_n$ such that it produce a local cascade in the $n$-th node. Then, due to Proposition~\ref{prop:local}, this implies that $\tau_n(\bs{g})\notin [L_s,U_s]$ almost surely. This, combined with the weakly consistency assumption, implies that $\tau_{n+1}(\bs{G}_{n+1})\notin [L_s,U_s]$ almost surely. A second application of Proposition~\ref{prop:local} make us to conclude that $\Pw{w}{\pi = 0 | \bs{G}_{n+1}}$ is therefore equal to $0$ o $1$. This, in turn, guarantees that $I(\pi_{n+1}:S_{n+1} | \bs{G}_{n} = \bs{g}) = 0)$ almost surely, showing that the $(n+1)$-th node experiences a local information cascade because of $\bs{G}_n = \bs{g}_0$. 

Finally, it is direct to see that a recursive application of the above argument allows one to prove that  $I(\pi_{n+m}:S_{n+m} | \bs{G}_{n} = \bs{g}) = 0)$ for all $m\geq 0$, confirming the existence of a global cascade.
\end{proof}

This theorem has a number of important consequences. Firstly, it provides an intuitive geometrical description about the nature of global cascades for networks with weak consistency. One can imagine the evolution of $\tau_n(\bs{G}_n)$ as function of $n$ as a random walk within the interval $[L_s,U_s]$. Because of the weakly consistency condition, if the random walk step out of the interval, it will never come back. Moreover, as consequence of this theorem, the stepping out of $[L_s,U_s]$ is a necessary and sufficient condition to trigger a global information cascade over the network.

Also, note that in the case where $G_n = \boldsymbol{X}^n$ (i.e. each node see every previous decision) is direct to prove that $G_n$ has weakly invertible transitions. Therefore, Theorem~\ref{teo2} is a generalization of Theorem~1 of \cite{rosas2017technological}, now being valid for the case where there are a fraction of Bizantine nodes within the network.

\section{Proof of concept}
\label{sec:4}

This section illustrates the main results obtained in Sections~\ref{sec:III} and \ref{sec:cascades} in a simple scenario. In the sequel, first Section~\ref{sec:scenario} describes the scenario, and then Section~\ref{sec:discussion} discusses numerical simulations.

\subsection{Scenario description}
\label{sec:scenario}

Let us consider a sensor network that has surveillance duties over a sensitive geographical area. The sensitive area could correspond to a factory, a drinkable water container or a warzone, whose key variables need to be supervised. The task of the sensor network is, through the observation of these variables, to detect the events $\{W=1\}$ and $\{W=0\}$ that correspond to the presence or absence of an attack to the surveilled area, respectively. No knowledge about of the prior distribution of $W$ is assumed.

We consider nodes that have been deployed randomly over the sensitive area, and hence their location follow a Poisson Point process (PPP). The ratio of the area of interest that falls within the range of each sensor is denoted by $r$. If attacks occur uniformly over the surveilled area, then $r$ is also the probability of an attack taking place under the coverage area of a particular sensor is. Note that, due to the limited sensing range, the miss-detection rate of individual nodes is roughly equal to $1-r$. As $r$ is usually a small number ($5\%$ in our simulations), this implies that without collaboration each node is extremely unreliable.

Each node measures its environment using a digital sensor of $m$ levels dynamical range (i.e. $S_n\in\{0,1,\dots,m-1\}$). Under the absence of an attack, the measured signal is assumed to be normally distributed with a particular mean value and variance. For simplicity of the analysis, we assume that $S_n$ conditioned in $\{W=0\}$ distributes following a binomial distribution of parameters $(m,q)$, i.e.
\begin{equation}\label{eq:sadg}
\Pw{0}{S_n=s_n} = \binom{m}{s_n} q^{s_n} (1-q)^{m-s_n} := f(s_n;m,q)
\end{equation}
which because of the central limit theorem approximates a Gaussian variable when $m$ is relatively large. Moreover, it is assumed that the sensor dynamical range is adapted to match the mean value on the lower third of the sensor dynamical range, i.e. $\E{ S_n |W=0} = m/3$. This naturally imposes the requirement $q=1/3$.

Following standard statistical approaches, it is further assumed that the sensors control the environment looking for events where the measured data is significantly high, i.e. when it is larger than mean value in more than two standard deviations. This corresponds e.g. when a specific chemical compound trespasses safe concentration values or when too much movement has been detected over a given time window~(see e.g. \cite{mckenna2008detecting}). Using the fact that $\text{Var}\{S_n\} = mq(1-q)$, this gives a threshold $T = \E{S_n} + 2 \sqrt{ \text{Var}\{S_n\} } = np + 2\sqrt{nq(1-q)}$. Therefore, it is assumed that an attack is related to the event of $S_n$ being uniformly distributed in $[T,m]$. Therefore, one finds that
\begin{align}
\Pw{1}{S_n=s_n} =& 
(1-r) \Pw{1}{ S_n=s_n | \text{\footnotesize{attack out of range}} }
+ r \Pw{1}{ S_n=s_n | \text{\footnotesize{attack in range}} } \nonumber \\
=&  (1-r) f(s_n;m,q) + r \frac{H( s_n - T)}{m-T}, \label{eq:sadg1}
\end{align}
where $H(x)$ is the discrete Heaviside (step) function given by
\begin{equation}
H(x) = 
\begin{cases}
1 \qquad &\text{if } x\geq 0 \\
0 \qquad &\text{in other case.}
\end{cases}
\end{equation}
In summary, $S_n$ conditioned on $\{W=1\}$ is modeled as a mixture model between a Binomial and a truncated uniform distribution, where the relative weight between them is determined by $r$ (c.f. Figure~\ref{fig:signal_dist}). Finally, using \eqref{eq:sadg}  and \eqref{eq:sadg}, the log-likelihood function of the signal $S_n$ can be determined as (see Figure~\ref{fig:loglikelihood})
\begin{equation}
\Lambda_{S_n}(s_n) = \log \frac{ \Pw{1}{S_n = s_n} }{ \Pw{0}{S_n=s_n} }
= \log \left\{ (1-r) + \frac{ r H(s_n - T) }{ (m-T) f(s_n;m,q) } \right \} .
\end{equation} 
%
%

\begin{figure}
  \centering
\begin{tikzpicture}[scale=0.8]
  \begin{axis}
[
 legend pos=north east,
 axis lines*=middle,
 cycle list name = nano mark style,
 xlabel={Digital sensor levels},
 ylabel={Probability},
 grid=both,
 minor grid style={gray!25},
 major grid style={gray!25},
 scaled ticks=false, 
 tick label style={/pgf/number format/fixed}
] 
\addplot table[x=x, y = W0, col sep=comma]{./signal_stuff.csv} ; 
\addlegendentry{$\Pw{0}{S_n=s}$} ; 
\addplot table[x=x, y = W1, col sep=comma]{./signal_stuff.csv} ; 
\addlegendentry{$\Pw{1}{S_n=s}$} ; 
  \end{axis}
\end{tikzpicture}
\caption{Probability distribution for a digital sensor of $m=16$ levels, conditioned on the events $\{W=0\}$ and $\{W=1\}$.}
\label{fig:signal_dist}
\end{figure}
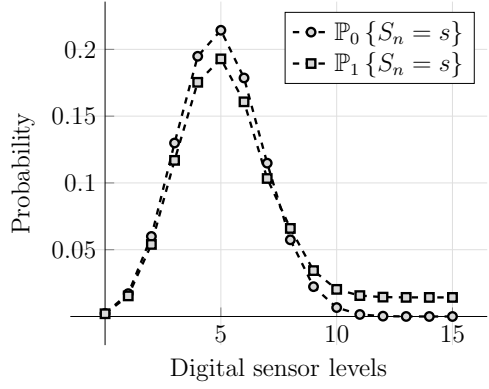

\begin{figure}
\centering
\begin{tikzpicture}[scale=0.8]
  \begin{axis}
[
 cycle list name = nano mark style,
 legend pos=north west,
 axis lines*=middle,
 xlabel={Digital sensor level},
 ylabel={Likelihood},
 grid=both,
 minor grid style={gray!25},
 major grid style={gray!25},
] 
\addplot table[x=x, y = Log-likelihood, col sep=comma]{./signal_stuff.csv} ; 
\addlegendentry{$\Lambda_{S_n}(s)$} ; 
  \end{axis}
\end{tikzpicture}
\caption{Log-likelihood of a digital signal of $m=16$ levels with respect to the variable $W$.}
\label{fig:loglikelihood}
\end{figure}
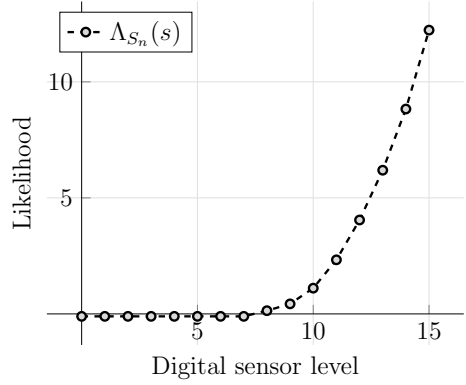

We focus on scenarios where a node can overhear the tranmissions of all the other nodes. However, we consider the case where the listening period is restricted\footnote{It is well-known that the wireless radios of small sensor nodes consume a similar amount of energy while transmitting or receiving data, and hence reducing overhearing periods is key for attaining energy efficiency and hence long network lifetime~\cite{rosas2012modulation}.}. We therefore study the case where the social information gathered by the $n$-th node is $\bs{G}_n = (X_{n-k-1},\dots,X_{n-1})$ if $n > k$. Here $k$ is a design parameter, whose impact in the network performance is studied in the next section.

\subsection{Discussion}
\label{sec:discussion}

We analysed the performance of networks of $N=300$ sensor nodes whose sensors can monitor $r=5\%$ of the target area. Using the definition given in \eqref{eq:MD} and \eqref{eq:FA}, combined with \eqref{eq:tauuuw}, miss-detection and false alarm rates were computed as
\begin{align}
\P{\text{MD}} =& \sum_{\bs{g}\in\mathcal{G}_n} F_1^\Lambda( \tau_n(\bs{g})) \Pw{1}{\bs{G}_n = \bs{g}} \quad \text{and}\\
\P{\text{FA}} =& \sum_{\bs{g}\in\mathcal{G}_n} (1-F_0^\Lambda( \tau_n(\bs{g})) ) \Pw{0}{\bs{G}_n = \bs{g}},
\end{align}
where the terms $\Pw{w}{\bs{G}_n=\bs{g}}$ are computed using Algorithm~1 (c.f. Section~\ref{sec':alg}. In order to favour the reduction of miss-detections over false alarms, $\tau_0=0$ is chosen as is the lowest value that still allows a non-trivial inference process\footnote{Simulations showed that if $\tau<0$ then $X_n=1$ for all $n\in\mathbb{N}$ independently of the value of $W$, triggering a premature information cascade.}. We consider an upper bound over the tolerable false alarm rate of $5\%$.

Simulations demonstrate that the proposed scheme enables strong network resilience in this scenario, allowing the sensor network to maintain a low miss-detection rate even in the presence of an important number of Byzantine nodes (see Figure~\ref{fig:performance}). Please recall that if a traditional distributed detection  scheme based on centralised decision is used, a topology-aware attacker can cause a miss-detection rate of $100\%$ by just compromising the few nodes that perform data aggregation (i.e. the FC(s)). Figure~\ref{fig:performance} shows that nodes that individually would have a miss-detection rate of $95\%$ can improve up to around $10\%$ even when $30\%$ of the nodes are under the control of the attacker. Therefore, by making all the nodes to agregate data, the network can overcome the influence of Bizantine nodes and hence even when some nodes have been compromised the rest of the network can and generate a correct inference.
\begin{figure}[h]
  \centering
\begin{tikzpicture}[scale=0.8]
%
  \begin{axis}
[
 ymode = log,
 mark size=0.7pt,
 legend pos=south west,
 cycle list name = nano line style,
 xlabel={Agent},
 xmax = 200,
 xmin =0,
 ylabel={Rate},
 ymin =0.0001,
 ymax = 1,
 grid=both,
 minor grid style={gray!25},
 major grid style={gray!25},
 width=0.7\linewidth,
 height=0.55\linewidth,
no marks
] 
\addplot table[x=agent, y = MD_0, col sep=comma]{./performance.csv} ; 
\addlegendentry{$N^*/N = 0$} ; 
\addplot table[x=agent, y = MD_0.1, col sep=comma]{./performance.csv} ; 
\addlegendentry{$N^*/N = 0.1$} ; 
\addplot table[x=agent, y = MD_0.3 , col sep=comma]{./performance.csv} ;
\addlegendentry{$N^*/N = 0.3$} ;
\addplot table[x=agent, y = MD_0.5 , col sep=comma]{./performance.csv} ;
\addlegendentry{$N^*/N = 0.5$} ;
%
%
  \end{axis}

\end{tikzpicture}
\caption{Performance for the inference of each node for various attack intensities, given by the average ratio of Byzantine nodes $N^*/N = p_b$. Agents overhear the previous $k=4$ broadcasted signals, and use sensors with dynamical range of $n=64$.}
\label{fig:performance}
\end{figure}
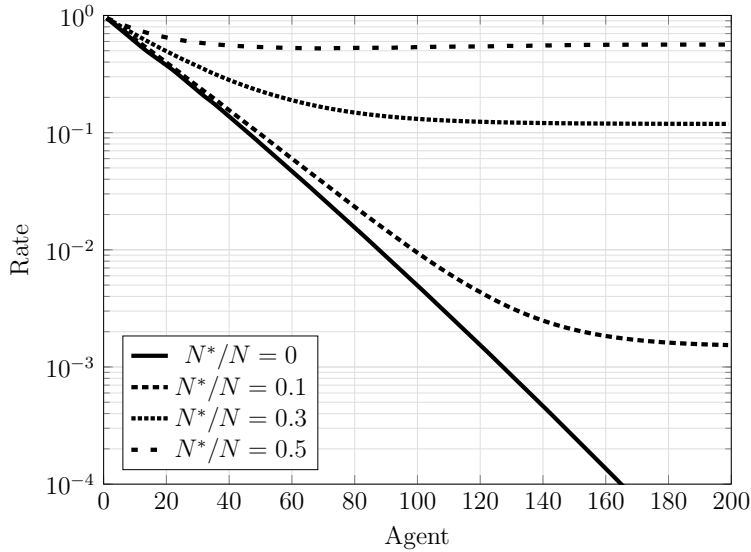

Please note that, for the case illustrated by  Figure~\ref{fig:performance}, when there are Byzantine nodes the miss-detection rate improves until the network size reaches $N=500$, achieving a performance of $\approx 10^{-12}$ (not shown in the Figure). This result has two important implications. First, this confirms the prediction of Theorem~\ref{teo2} that if the signal log-likelihood is bounded then information cascades are eventually dominant, hence stoping the learning process of the network (for a more detailed discussion about this issue please c.f. \cite{rosas2017technological}). Secondly, this result stress a key difference of our approach with respect to the existent literature about information cascades: \emph{even if information cascades become dominant and hence perfect social learning cannot be achieved by bounded signals, the achieved performance can still be very high and hence useful in a practical information-processing setup}.  

%


The network resilience provided by our scheme is influenced by the sensor dynamical range, $m$, as a higher sensor resolution is likely to provide more discriminative power. Our results show three sharply distinct regimes (see Figure~\ref{fig:signal_range}). First, if $m$ is too small ($m\leq 4$) the network performance is very poor, irrespective of the number of Byzantine nodes. Secondly, if $8\leq m \leq 32$ the miss-detection rate without Byzantine nodes is approx. $10\%$ (cf. Figure~\ref{fig:signal_range}) and is exponentially degraded by the presence of Byzantine nodes. Finally, if $m\geq 64$ then the performance under no Byzantine nodes is very high, and is degraded super-exponentially by the presence of Byzantine nodes. Interestingly, the point at which the miss-detection rate of this regime goes above $10^{-1}$ is at $N^*/N=1/3$, having some resemblance with the well-known $1/3$ threshold of the Byzantine generals problem~\cite{lamport1982byzantine}. Also, it is intriguing the fact that differences between $8$ and $32$ levels in the dynamical range gives practically no performance benefits.
\begin{figure}[h]
  \centering
\begin{tikzpicture}[scale=0.8, mark options={solid,scale=2,fill=gray!30}]
  \begin{axis}
[
 ymode = log,
 legend pos=south east,
 xlabel={Attack intensity ($N^*/N$)},
 xmax = 1,
 xmin =0,
 ylabel={Miss-detection rate},
 ymin =0.00000001,
 ymax = 1,
 grid=both,
 minor grid style={gray!25},
 major grid style={gray!25},
 width=0.7\linewidth,
 height=0.55\linewidth,
cycle list name = nano mark style
] 
\addplot table[x=x, y =Range 2, col sep=comma]{./range2.csv} ; 
\addlegendentry{$m=2$} ; 
\addplot table[x=x, y =Range 16, col sep=comma]{./range16.csv} ; 
\addlegendentry{$m=16$} ; 
\addplot table[x=x, y =Range 32, col sep=comma]{./range32.csv} ; 
\addlegendentry{$m=32$} ; 
\addplot table[x=x, y =Range 64, col sep=comma]{./range64.csv} ; 
\addlegendentry{$m=64$} 
\addplot table[x=x, y =Range 256, col sep=comma]{./range256.csv} ; 
\addlegendentry{$m=256$} 
\end{axis}
\end{tikzpicture}
\caption{Effect of the sensor dynamical range over the network resilience.}
\label{fig:signal_range}
\end{figure}
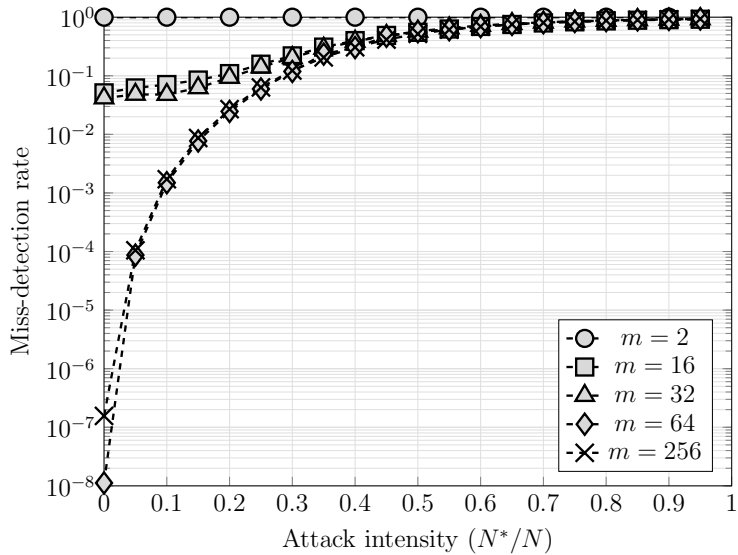

Our results also show the effects of the memory size, $k$, showing that larger values of $k$ provides great benefits for the network resilience (see Figure~\ref{fig:memory}). In effect, by performing an optimal bayesian inference over 8 broadcasted signals the network miss-detection rate remains bellow $10\%$ up to an attack intensity of $50\%$ of Byzantine nodes. Unfortunately, the computation and storage requirements of Algorithm~1 grow exponentially with $k$, and hence using memories beyond $k=10$ is not practical for resource-limited sensor networks. Overcoming this limitation is an relevant future line of investigation.
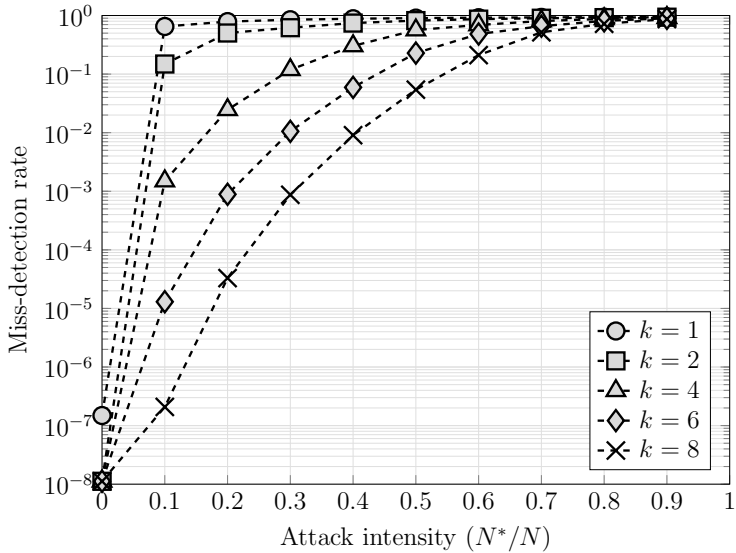
\begin{figure}[h]
  \centering
\begin{tikzpicture}[scale=0.8, mark options={solid,scale=2,fill=gray!30}]
  \begin{axis}
[
 ymode = log,
 legend pos=south east,
 xlabel={Attack intensity ($N^*/N$)},
 xmax = 1,
 xmin =0,
 ylabel={Miss-detection rate},
 ymin =0.00000001,
 ymax = 1,
 grid=both,
 minor grid style={gray!25},
 major grid style={gray!25},
 width=0.7\linewidth,
 height=0.55\linewidth,
cycle list name = nano mark style
] 
\addplot table[x=x, y =Memory 1, col sep=comma]{./memory1.csv} ; 
\addlegendentry{$k=1$} ; 
%
%
\addplot table[x=x, y =Memory 2, col sep=comma]{./memory2.csv} ; 
\addlegendentry{$k=2$} ; 
\addplot table[x=x, y =Memory 4, col sep=comma]{./memory4.csv} ; 
\addlegendentry{$k=4$} ; 
\addplot table[x=x, y =Memory 6, col sep=comma]{./memory6.csv} ; 
\addlegendentry{$k=6$} 
\addplot table[x=x, y =Memory 8, col sep=comma]{./memory8.csv} ; 
\addlegendentry{$k=8$} 
\end{axis}
\end{tikzpicture}
\caption{A larger node memory, which allows to include more social signals into the inference process, greatly improves the network resilience.}
\label{fig:memory}
\end{figure}

\section{Conclusions}
\label{sec:5}

Traditional approaches to data aggregation over information networks are based on a strong division of labour, which discriminates between sensing nodes that merely sense and forward data, and fusion centers that monopolize all the processing and inference capabilities. This generates a single point of failure, which is likely to be exploited by smart adversaries whose interest is the disruption of the network capabilities.

This serious security thread can be overcome by distributing the decision making process across the network using social learning principles. This approach avoids single points of failure by generating a large number of nodes from where aggregated data can be accessed. In this paper a social learning data fusion scheme has been proposed, which is suitable of being implemented in devices with limited computational capabilities.

We showed that if the private signals are bounded then each local information cascades triggers a global cascade, extending previous results to the case where an adversary controles an number of Byzantine nodes. This result is highly relevant for sensor networks, as digital sensors are intrinsically bounded and hence satisfy the assumptions of these results. However, contrasting with the literature, our approach does not focus on the conditions that guarantee perfect asymptotical social learning (i.e. miss-detection and false alarm rates converging to zero), but if their limit is small enough for practical applications. Our results show that this is the case, even when limiting the number of overheared transmissions.

Moreover, our results suggest that social learning principles can enable significant resilience of an information network against topology-aware data falsification attacks, which can totally disable the detection capabilities of traditional sensor networks. Furthermore, our results illustrate how the network resilience can persists even when the attacker has compromised an important number of nodes.

It is our hope that these results can motivate further explorations on the interface between distributed decision making, statistical inference and signal processing over technological and social networks.

\appendix

\section{Properties of $F_w^\Lambda$}
\label{sec:appendix_sig}

For simplicity let us consider the case of real-value signals, i.e. $S_n\in \mathbb{R}$. In this case, the c.d.f. of the signal likelihood is given by
\begin{equation}\label{eq:asdqwrgty}
F_w^\Lambda(y) = \int_{\mathcal{S}^y} \text{d}\mu_w
\end{equation}
where $\mathcal{S}^y = \{ x\in \mathbb{R} | \Lambda_s(x) \leq y \}$. If $\Lambda_s$ is an increasing function, then $\mathcal{S}^y=\{x\in\mathbb{R} |x \leq \Lambda_s^{-1}(y) \} = (-\infty, \Lambda_s^{-1}(y) ]$ and hence
\begin{equation}
F_w^\Lambda(y) = \int_{-\infty}^{\Lambda_s^{-1}(y)} \text{d}\mu_w = H_w( \Lambda_s^{-1}(y) )
\enspace,
\end{equation}
where $H_w(s) $ is the cumulative density function (c.d.f.) of $S_n$ for $W=w$. For the general case where $\Lambda_s$ is an arbitrary (piece-wise continuous) function, then $\mathcal{S}^y$ can be expressed as the union of intervals. Then $\cup_{j=1}^{\infty} [a_j(y),b_j(y)] = \mathcal{S}^y$ (note that $\Lambda_s(a_j(y))=\Lambda_s(b_k(y))=y$) and hence from \eqref{eq:asdqwrgty} is clear that
\begin{equation}\label{eq:asdqwety}
F_w^\Lambda(y) = \sum_{j=1}^\infty \int_{a_j(y)}^{b_j(y)} \text{d}\mu_w = \sum_{j=1}^\infty \left[ H_w( b_j(y)) - H_w(a_j(y)) \right]
\enspace.
\end{equation}

\section{Proof of Lemma~\ref{lemma12123123}}
\label{app:2}

\begin{proof} Lets assume that the process $\bs{G}_n$ has strong consistent transitions and consider $\bs{g'}\in\mathcal{G}_{n-1}$ such that $\tau_{n-1}(\bs{g'}) \leq L_s$. Note that, under these conditions $F_w^\Lambda(\tau_{n-1}(\bs{g'})) = 0$, and hence
\begin{equation}
  \Pw{w}{X_{n-1}=1|\bs{G}_{n-1}=\bs{g'}} = 1 - z_0 - z_1 F_w^\Lambda(\tau_{n-1}(\bs{g'})) = 1 - p_b c_{0|1} = 0
\end{equation}
holds for any $w\in\{0,1\}$. Moreover, this allows to find that
\begin{align}
\mathbb{P}_w\{ \bs{G}_n=\bs{g}|\bs{G}_{n-1}=\bs{g'}\} 
&= \sum_{x_n\in\{0,1\}} \beta_w^n(\bs{g}|x_n,\bs{g'}) \Pw{w}{X_{n-1}=x_n|\bs{G}_{n-1}=\bs{g'}} \nonumber\\
&= \beta_w^n(\bs{g}|1,\bs{g'})
\enspace.
\end{align}
Therefore, due to the strongly consistent transition property, if $\Pw{w}{\bs{G}_n=\bs{g}|\bs{G}_{n-1}=\bs{g'}} = \beta_w^n(\bs{g}|1,\bs{g'}) > 0$ then
\begin{equation}
L_s \geq \tau_{n-1}(\bs{g'}) \geq \tau_n(\bs{g})
\enspace,
\end{equation}
proving the weak consistent transition property. The proof for the case of $\tau_{n-1}(\bs{g'}) \geq U_s$ is analogous.
\end{proof}

\section{Proof of Lemma~\ref{ffewafawef}}
\label{app:3}

\begin{proof} 
Let us consider $\bs{g}_0\in\mathcal{G}_{n}$ such that $\tau_{n}(\bs{g}_0) \notin [L_s,U_s]$. Then, due to the weakly invertible evolution, for each $x\in\{0,1\}$ there exists $\bs{g}(x)\in\mathcal{G}_{n+1}$ such that
\begin{equation}
\beta_w^n(\bs{g}|x,\bs{g}_0) = 
\begin{cases}
1 \qquad &\text{if } \bs{g}=\bs{g}(x),\\
0 \qquad &\text{in other case.}
\end{cases}
\end{equation}
Moreover, note that while the deterministic assumption implies that the event $\{\bs{G}_{n}=\bs{g}_0\}$ could be followed by either $\{\bs{G}_{n+1}=\bs{g}(0)\}$ or $\{\bs{G}_{n+1}=\bs{g}(1)\}$, the 1-1 assumption requires that $\bs{g}(0) = \bs{g}(1)$. With this, note that
\begin{align}
\Lambda_{\bs{G}_{n+1}}(\bs{g}(0)) =& \log \frac{ \Pw{1}{\bs{G}_{n+1} = \bs{g}(0) }}{ \Pw{0}{\bs{G}_{n+1} = \bs{g}(0) }}  \nonumber\\
=& \log \frac{ \sum_{\substack{\bs{g'}\in\mathcal{G}_{n}\\ x \in\{0,1\}}} \beta_w^n(\bs{g}(x)|x,\bs{g'}) \Pw{1}{X_n=x,\bs{G}_{n}=\bs{g'}} }{ \sum_{\substack{\bs{g'}\in\mathcal{G}_{n}\\ x\in\{0,1\}}} \beta_w^n(\bs{g}(x)|x,\bs{g'}) \Pw{0}{X_n=x,\bs{G}_{n}=\bs{g'}} } \label{eqerr} \\
=& \log \frac{ \sum_{x\in\{0,1\}} \Pw{1}{X_n=x|\bs{G}_n=\bs{g}_0} \Pw{1}{ \bs{G}_n = \bs{g}_0} }{ \sum_{x\in\{0,1\}} \Pw{0}{X_n=x|\bs{G}_n=\bs{g}_0} \Pw{0}{ \bs{G}_n = \bs{g}_0} } \label{eqerr1} \\
=& \Lambda_{\bs{G}_{n-1}}(\bs{g}_0), \label{eqerr2}
\end{align}
Above, \eqref{eqerr} is a consequence of $\bs{g}(0) = \bs{g}(1)$, while \eqref{eqerr1} is because of the 1-1 condition over the dynamic. Finally, to justify \eqref{eqerr2} let us first consider
\begin{equation}
\Pw{w}{X_n=x|\bs{G}_n=\bs{g}_0} = \lambda( z_0 +  z_1 F_w^\Lambda(\tau_n(\bs{g}_0)),x) .
\end{equation}
Because $\tau_n(\bs{g}_0) \notin [L_s,U_s]$ then $F_w^\Lambda(\tau_n(\bs{g}_0))$ is either 0 or 1; in any case it does not depends on $W$. This, in turn means that $\Pw{1}{X_n=x|\bs{G}_n=\bs{g}_0} = \Pw{0}{X_n=x|\bs{G}_n=\bs{g}_0}$, which explains how \eqref{eqerr2} is obtained.

Please note that \eqref{eqerr2} shows that, once $\tau_n$ leaves $[L_s,U_s]$, it keeps a constant value. This, in turn, shows that weakly deterministic transitions satisfy the weakly consistency condition.
\end{proof}

\section*{Acknowledgements}

Fernando Rosas is supported by the European Union's H2020 research and innovation programme, under the Marie Sk\l{}odowska-Curie grant agreement No. 702981.

\newcommand{\BMCxmlcomment}[1]{}

\BMCxmlcomment{

<refgrp>

<bibl id="B1">
  <title><p>Cyber--physical systems: A perspective at the
  centennial</p></title>
  <aug>
    <au><snm>Kim</snm><fnm>KD</fnm></au>
    <au><snm>Kumar</snm><fnm>PR</fnm></au>
  </aug>
  <source>Proceedings of the IEEE</source>
  <publisher>IEEE</publisher>
  <pubdate>2012</pubdate>
  <volume>100</volume>
  <issue>Special Centennial Issue</issue>
  <fpage>1287</fpage>
  <lpage>-1308</lpage>
</bibl>

<bibl id="B2">
  <title><p>What you need to know about the WannaCry Ransomware</p></title>
  <aug>
    <au><snm>Response</snm><fnm>SS</fnm></au>
  </aug>
</bibl>

<bibl id="B3">
  <title><p>Distributed inference in wireless sensor networks</p></title>
  <aug>
    <au><snm>Veeravalli</snm><fnm>VV</fnm></au>
    <au><snm>Varshney</snm><fnm>PK</fnm></au>
  </aug>
  <source>Philosophical Transactions of the Royal Society of London A:
  Mathematical, Physical and Engineering Sciences</source>
  <publisher>The Royal Society</publisher>
  <pubdate>2012</pubdate>
  <volume>370</volume>
  <issue>1958</issue>
  <fpage>100</fpage>
  <lpage>-117</lpage>
</bibl>

<bibl id="B4">
  <title><p>Distributed Detection and Estimation in Wireless</p></title>
  <aug>
    <au><snm>Barbarossa</snm><fnm>S</fnm></au>
    <au><snm>Sardellitti</snm><fnm>S</fnm></au>
    <au><snm>Di Lorenzo</snm><fnm>P</fnm></au>
  </aug>
  <source>Academic Press Library in Signal Processing: Communications and Radar
  Signal Processing</source>
  <publisher>Academic Press</publisher>
  <pubdate>2013</pubdate>
  <volume>2</volume>
  <fpage>329</fpage>
</bibl>

<bibl id="B5">
  <title><p>The role of advanced sensing in smart cities</p></title>
  <aug>
    <au><snm>Hancke</snm><fnm>GP</fnm></au>
    <au><snm>Hancke Jr</snm><fnm>GP</fnm></au>
    <au><cnm>others</cnm></au>
  </aug>
  <source>Sensors</source>
  <publisher>Multidisciplinary Digital Publishing Institute</publisher>
  <pubdate>2012</pubdate>
  <volume>13</volume>
  <issue>1</issue>
  <fpage>393</fpage>
  <lpage>-425</lpage>
</bibl>

<bibl id="B6">
  <title><p>Scalable anomaly detection for smart city infrastructure
  networks</p></title>
  <aug>
    <au><snm>Difallah</snm><fnm>DE</fnm></au>
    <au><snm>Cudre Mauroux</snm><fnm>P</fnm></au>
    <au><snm>McKenna</snm><fnm>SA</fnm></au>
  </aug>
  <source>IEEE Internet Computing</source>
  <publisher>IEEE</publisher>
  <pubdate>2013</pubdate>
  <volume>17</volume>
  <issue>6</issue>
  <fpage>39</fpage>
  <lpage>-47</lpage>
</bibl>

<bibl id="B7">
  <title><p>Contamination detection in drinking water distribution systems
  using sensor networks</p></title>
  <aug>
    <au><snm>Lambrou</snm><fnm>TP</fnm></au>
    <au><snm>Panayiotou</snm><fnm>CG</fnm></au>
    <au><snm>Polycarpou</snm><fnm>MM</fnm></au>
  </aug>
  <source>Control Conference (ECC), 2015 European</source>
  <pubdate>2015</pubdate>
  <fpage>3298</fpage>
  <lpage>-3303</lpage>
</bibl>

<bibl id="B8">
  <title><p>A low-cost sensor network for real-time monitoring and
  contamination detection in drinking water distribution systems</p></title>
  <aug>
    <au><snm>Lambrou</snm><fnm>TP</fnm></au>
    <au><snm>Anastasiou</snm><fnm>CC</fnm></au>
    <au><snm>Panayiotou</snm><fnm>CG</fnm></au>
    <au><snm>Polycarpou</snm><fnm>MM</fnm></au>
  </aug>
  <source>IEEE sensors journal</source>
  <publisher>IEEE</publisher>
  <pubdate>2014</pubdate>
  <volume>14</volume>
  <issue>8</issue>
  <fpage>2765</fpage>
  <lpage>-2772</lpage>
</bibl>

<bibl id="B9">
  <title><p>Security in wireless sensor networks</p></title>
  <aug>
    <au><snm>Perrig</snm><fnm>A</fnm></au>
    <au><snm>Stankovic</snm><fnm>J</fnm></au>
    <au><snm>Wagner</snm><fnm>D</fnm></au>
  </aug>
  <source>Communications of the ACM</source>
  <publisher>ACM</publisher>
  <pubdate>2004</pubdate>
  <volume>47</volume>
  <issue>6</issue>
  <fpage>53</fpage>
  <lpage>-57</lpage>
</bibl>

<bibl id="B10">
  <title><p>Designing secure sensor networks</p></title>
  <aug>
    <au><snm>Shi</snm><fnm>E</fnm></au>
    <au><snm>Perrig</snm><fnm>A</fnm></au>
  </aug>
  <source>IEEE Wireless Communications</source>
  <publisher>IEEE</publisher>
  <pubdate>2004</pubdate>
  <volume>11</volume>
  <issue>6</issue>
  <fpage>38</fpage>
  <lpage>-43</lpage>
</bibl>

<bibl id="B11">
  <title><p>Security in wireless sensor networks: issues and
  challenges</p></title>
  <aug>
    <au><snm>Pathan</snm><fnm>ASK</fnm></au>
    <au><snm>Lee</snm><fnm>HW</fnm></au>
    <au><snm>Hong</snm><fnm>CS</fnm></au>
  </aug>
  <source>Advanced Communication Technology, 2006. ICACT 2006. The 8th
  International Conference</source>
  <pubdate>2006</pubdate>
  <volume>2</volume>
  <fpage>6</fpage>
  <lpage>-pp</lpage>
</bibl>

<bibl id="B12">
  <title><p>Low-Energy Security: Limits and Opportunities in the Internet of
  Things</p></title>
  <aug>
    <au><snm>Trappe</snm><fnm>W.</fnm></au>
    <au><snm>Howard</snm><fnm>R.</fnm></au>
    <au><snm>Moore</snm><fnm>R. S.</fnm></au>
  </aug>
  <source>IEEE Security Privacy</source>
  <pubdate>2015</pubdate>
  <volume>13</volume>
  <issue>1</issue>
  <fpage>14</fpage>
  <lpage>21</lpage>
</bibl>

<bibl id="B13">
  <title><p>Distributed detection in the presence of byzantine
  attacks</p></title>
  <aug>
    <au><snm>Marano</snm><fnm>S</fnm></au>
    <au><snm>Matta</snm><fnm>V</fnm></au>
    <au><snm>Tong</snm><fnm>L</fnm></au>
  </aug>
  <source>IEEE Transactions on Signal Processing</source>
  <publisher>IEEE</publisher>
  <pubdate>2009</pubdate>
  <volume>57</volume>
  <issue>1</issue>
  <fpage>16</fpage>
  <lpage>-29</lpage>
</bibl>

<bibl id="B14">
  <title><p>The Byzantine generals problem</p></title>
  <aug>
    <au><snm>Lamport</snm><fnm>L</fnm></au>
    <au><snm>Shostak</snm><fnm>R</fnm></au>
    <au><snm>Pease</snm><fnm>M</fnm></au>
  </aug>
  <source>ACM Transactions on Programming Languages and Systems
  (TOPLAS)</source>
  <publisher>ACM</publisher>
  <pubdate>1982</pubdate>
  <volume>4</volume>
  <issue>3</issue>
  <fpage>382</fpage>
  <lpage>-401</lpage>
</bibl>

<bibl id="B15">
  <title><p>Distributed inference with byzantine data: State-of-the-art review
  on data falsification attacks</p></title>
  <aug>
    <au><snm>Vempaty</snm><fnm>A</fnm></au>
    <au><snm>Tong</snm><fnm>L</fnm></au>
    <au><snm>Varshney</snm><fnm>PK</fnm></au>
  </aug>
  <source>IEEE Signal Processing Magazine</source>
  <publisher>IEEE</publisher>
  <pubdate>2013</pubdate>
  <volume>30</volume>
  <issue>5</issue>
  <fpage>65</fpage>
  <lpage>-75</lpage>
</bibl>

<bibl id="B16">
  <title><p>Distributed Inference With M-Ary Quantized Data in the Presence of
  Byzantine Attacks</p></title>
  <aug>
    <au><snm>Nadendla</snm><fnm>V. S. S.</fnm></au>
    <au><snm>Han</snm><fnm>Y. S.</fnm></au>
    <au><snm>Varshney</snm><fnm>P. K.</fnm></au>
  </aug>
  <source>IEEE Transactions on Signal Processing</source>
  <pubdate>2014</pubdate>
  <volume>62</volume>
  <issue>10</issue>
  <fpage>2681</fpage>
  <lpage>2695</lpage>
</bibl>

<bibl id="B17">
  <title><p>Asymptotically Optimum Distributed Estimation in the Presence of
  Attacks</p></title>
  <aug>
    <au><snm>Zhang</snm><fnm>J.</fnm></au>
    <au><snm>Blum</snm><fnm>R. S.</fnm></au>
    <au><snm>Lu</snm><fnm>X.</fnm></au>
    <au><snm>Conus</snm><fnm>D.</fnm></au>
  </aug>
  <source>IEEE Transactions on Signal Processing</source>
  <pubdate>2015</pubdate>
  <volume>63</volume>
  <issue>5</issue>
  <fpage>1086</fpage>
  <lpage>1101</lpage>
</bibl>

<bibl id="B18">
  <title><p>Distributed Bayesian Detection in the Presence of Byzantine
  Data</p></title>
  <aug>
    <au><snm>Kailkhura</snm><fnm>B.</fnm></au>
    <au><snm>Han</snm><fnm>Y. S.</fnm></au>
    <au><snm>Brahma</snm><fnm>S.</fnm></au>
    <au><snm>Varshney</snm><fnm>P. K.</fnm></au>
  </aug>
  <source>IEEE Transactions on Signal Processing</source>
  <pubdate>2015</pubdate>
  <volume>63</volume>
  <issue>19</issue>
  <fpage>5250</fpage>
  <lpage>5263</lpage>
</bibl>

<bibl id="B19">
  <title><p>Distributed detection in tree topologies with
  byzantines</p></title>
  <aug>
    <au><snm>Kailkhura</snm><fnm>B</fnm></au>
    <au><snm>Brahma</snm><fnm>S</fnm></au>
    <au><snm>Han</snm><fnm>YS</fnm></au>
    <au><snm>Varshney</snm><fnm>PK</fnm></au>
  </aug>
  <source>IEEE Transactions on Signal Processing</source>
  <publisher>IEEE</publisher>
  <pubdate>2014</pubdate>
  <volume>62</volume>
  <issue>12</issue>
  <fpage>3208</fpage>
  <lpage>-3219</lpage>
</bibl>

<bibl id="B20">
  <title><p>Distributed Detection in Tree Networks: Byzantines and Mitigation
  Techniques</p></title>
  <aug>
    <au><snm>Kailkhura</snm><fnm>B.</fnm></au>
    <au><snm>Brahma</snm><fnm>S.</fnm></au>
    <au><snm>Dulek</snm><fnm>B.</fnm></au>
    <au><snm>Han</snm><fnm>Y. S.</fnm></au>
    <au><snm>Varshney</snm><fnm>P. K.</fnm></au>
  </aug>
  <source>IEEE Transactions on Information Forensics and Security</source>
  <pubdate>2015</pubdate>
  <volume>10</volume>
  <issue>7</issue>
  <fpage>1499</fpage>
  <lpage>1512</lpage>
</bibl>

<bibl id="B21">
  <title><p>Machine-to-machine communications: Technologies and
  challenges</p></title>
  <aug>
    <au><snm>Chen</snm><fnm>KC</fnm></au>
    <au><snm>Lien</snm><fnm>SY</fnm></au>
  </aug>
  <source>Ad Hoc Networks</source>
  <publisher>Elsevier</publisher>
  <pubdate>2014</pubdate>
  <volume>18</volume>
  <fpage>3</fpage>
  <lpage>-23</lpage>
</bibl>

<bibl id="B22">
  <title><p>Distributed detection of node replication attacks in sensor
  networks</p></title>
  <aug>
    <au><snm>Parno</snm><fnm>B</fnm></au>
    <au><snm>Perrig</snm><fnm>A</fnm></au>
    <au><snm>Gligor</snm><fnm>V</fnm></au>
  </aug>
  <source>2005 IEEE Symposium on Security and Privacy (S\&P'05)</source>
  <pubdate>2005</pubdate>
  <fpage>49</fpage>
  <lpage>-63</lpage>
</bibl>

<bibl id="B23">
  <title><p>Improving spectrum efficiency via in-network computations in
  cognitive radio sensor networks</p></title>
  <aug>
    <au><snm>Lin</snm><fnm>SC</fnm></au>
    <au><snm>Chen</snm><fnm>KC</fnm></au>
  </aug>
  <source>IEEE Transactions on Wireless Communications</source>
  <publisher>IEEE</publisher>
  <pubdate>2014</pubdate>
  <volume>13</volume>
  <issue>3</issue>
  <fpage>1222</fpage>
  <lpage>-1234</lpage>
</bibl>

<bibl id="B24">
  <title><p>Quantifying collectivity</p></title>
  <aug>
    <au><snm>Daniels</snm><fnm>BC</fnm></au>
    <au><snm>Ellison</snm><fnm>CJ</fnm></au>
    <au><snm>Krakauer</snm><fnm>DC</fnm></au>
    <au><snm>Flack</snm><fnm>JC</fnm></au>
  </aug>
  <source>Current opinion in neurobiology</source>
  <publisher>Elsevier</publisher>
  <pubdate>2016</pubdate>
  <volume>37</volume>
  <fpage>106</fpage>
  <lpage>-113</lpage>
</bibl>

<bibl id="B25">
  <title><p>Conflicts of interest improve collective computation of adaptive
  social structures</p></title>
  <aug>
    <au><snm>Brush</snm><fnm>ER</fnm></au>
    <au><snm>Krakauer</snm><fnm>DC</fnm></au>
    <au><snm>Flack</snm><fnm>JC</fnm></au>
  </aug>
  <source>Science Advances</source>
  <publisher>American Association for the Advancement of Science</publisher>
  <pubdate>2018</pubdate>
  <volume>4</volume>
  <issue>1</issue>
  <fpage>e1603311</fpage>
</bibl>

<bibl id="B26">
  <title><p>Decentralized detection</p></title>
  <aug>
    <au><snm>Tsitsiklis</snm><fnm>JN</fnm></au>
    <au><cnm>others</cnm></au>
  </aug>
  <source>Advances in Statistical Signal Processing</source>
  <pubdate>1993</pubdate>
  <volume>2</volume>
  <issue>2</issue>
  <fpage>297</fpage>
  <lpage>-344</lpage>
</bibl>

<bibl id="B27">
  <title><p>Distributed detection with multiple sensors I.
  Fundamentals</p></title>
  <aug>
    <au><snm>Viswanathan</snm><fnm>R</fnm></au>
    <au><snm>Varshney</snm><fnm>PK</fnm></au>
  </aug>
  <source>Proceedings of the IEEE</source>
  <publisher>IEEE</publisher>
  <pubdate>1997</pubdate>
  <volume>85</volume>
  <issue>1</issue>
  <fpage>54</fpage>
  <lpage>-63</lpage>
</bibl>

<bibl id="B28">
  <title><p>Distributed detection with multiple sensors i. advanced
  topics</p></title>
  <aug>
    <au><snm>Blum</snm><fnm>RS</fnm></au>
    <au><snm>Kassam</snm><fnm>SA</fnm></au>
    <au><snm>Poor</snm><fnm>HV</fnm></au>
  </aug>
  <source>Proceedings of the IEEE</source>
  <publisher>IEEE</publisher>
  <pubdate>1997</pubdate>
  <volume>85</volume>
  <issue>1</issue>
  <fpage>64</fpage>
  <lpage>-79</lpage>
</bibl>

<bibl id="B29">
  <title><p>Channel aware distributed detection in wireless sensor
  networks</p></title>
  <aug>
    <au><snm>Chen</snm><fnm>B</fnm></au>
    <au><snm>Tong</snm><fnm>L</fnm></au>
    <au><snm>Varshney</snm><fnm>PK</fnm></au>
  </aug>
  <source>IEEE Signal Processing Mag</source>
  <pubdate>2006</pubdate>
</bibl>

<bibl id="B30">
  <title><p>Wireless sensors in distributed detection applications</p></title>
  <aug>
    <au><snm>Chamberland</snm><fnm>JF</fnm></au>
    <au><snm>Veeravalli</snm><fnm>VV</fnm></au>
  </aug>
  <source>IEEE signal processing magazine</source>
  <publisher>IEEE</publisher>
  <pubdate>2007</pubdate>
  <volume>24</volume>
  <issue>3</issue>
  <fpage>16</fpage>
  <lpage>-25</lpage>
</bibl>

<bibl id="B31">
  <title><p>On the complexity of decentralized decision making and detection
  problems</p></title>
  <aug>
    <au><snm>Tsitsiklis</snm><fnm>J</fnm></au>
    <au><snm>Athans</snm><fnm>M</fnm></au>
  </aug>
  <source>IEEE Transactions on Automatic Control</source>
  <publisher>IEEE</publisher>
  <pubdate>1985</pubdate>
  <volume>30</volume>
  <issue>5</issue>
  <fpage>440</fpage>
  <lpage>-446</lpage>
</bibl>

<bibl id="B32">
  <title><p>Optimum quantization for detector fusion: some proofs, examples,
  and pathology</p></title>
  <aug>
    <au><snm>Warren</snm><fnm>D</fnm></au>
    <au><snm>Willett</snm><fnm>P</fnm></au>
  </aug>
  <source>Journal of the Franklin Institute</source>
  <publisher>Elsevier</publisher>
  <pubdate>1999</pubdate>
  <volume>336</volume>
  <issue>2</issue>
  <fpage>323</fpage>
  <lpage>-359</lpage>
</bibl>

<bibl id="B33">
  <title><p>Asymptotic results for decentralized detection in power constrained
  wireless sensor networks</p></title>
  <aug>
    <au><snm>Chamberland</snm><fnm>J F</fnm></au>
    <au><snm>Veeravalli</snm><fnm>VV</fnm></au>
  </aug>
  <source>IEEE Journal on selected areas in communications</source>
  <publisher>IEEE</publisher>
  <pubdate>2004</pubdate>
  <volume>22</volume>
  <issue>6</issue>
  <fpage>1007</fpage>
  <lpage>-1015</lpage>
</bibl>

<bibl id="B34">
  <title><p>Networks, crowds, and markets</p></title>
  <aug>
    <au><snm>Easley</snm><fnm>D</fnm></au>
    <au><snm>Kleinberg</snm><fnm>J</fnm></au>
  </aug>
  <source>Cambridge University Press</source>
  <publisher>Cambridge Univ Press</publisher>
  <pubdate>2010</pubdate>
  <volume>1</volume>
  <issue>2.1</issue>
  <fpage>2</fpage>
  <lpage>-1</lpage>
</bibl>

<bibl id="B35">
  <title><p>Opinion dynamics and learning in social networks</p></title>
  <aug>
    <au><snm>Acemoglu</snm><fnm>D</fnm></au>
    <au><snm>Ozdaglar</snm><fnm>A</fnm></au>
  </aug>
  <source>Dynamic Games and Applications</source>
  <publisher>Springer</publisher>
  <pubdate>2011</pubdate>
  <volume>1</volume>
  <issue>1</issue>
  <fpage>3</fpage>
  <lpage>-49</lpage>
</bibl>

<bibl id="B36">
  <title><p>A simple model of herd behavior</p></title>
  <aug>
    <au><snm>Banerjee</snm><fnm>AV</fnm></au>
  </aug>
  <source>The Quarterly Journal of Economics</source>
  <publisher>JSTOR</publisher>
  <pubdate>1992</pubdate>
  <fpage>797</fpage>
  <lpage>-817</lpage>
</bibl>

<bibl id="B37">
  <title><p>A theory of fads, fashion, custom, and cultural change as
  informational cascades</p></title>
  <aug>
    <au><snm>Bikhchandani</snm><fnm>S</fnm></au>
    <au><snm>Hirshleifer</snm><fnm>D</fnm></au>
    <au><snm>Welch</snm><fnm>I</fnm></au>
  </aug>
  <source>Journal of political Economy</source>
  <publisher>JSTOR</publisher>
  <pubdate>1992</pubdate>
  <fpage>992</fpage>
  <lpage>-1026</lpage>
</bibl>

<bibl id="B38">
  <title><p>Learning from the behavior of others: Conformity, fads, and
  informational cascades</p></title>
  <aug>
    <au><snm>Bikhchandani</snm><fnm>S</fnm></au>
    <au><snm>Hirshleifer</snm><fnm>D</fnm></au>
    <au><snm>Welch</snm><fnm>I</fnm></au>
  </aug>
  <source>The Journal of Economic Perspectives</source>
  <publisher>JSTOR</publisher>
  <pubdate>1998</pubdate>
  <volume>12</volume>
  <issue>3</issue>
  <fpage>151</fpage>
  <lpage>-170</lpage>
</bibl>

<bibl id="B39">
  <title><p>Pathological outcomes of observational learning</p></title>
  <aug>
    <au><snm>Smith</snm><fnm>L</fnm></au>
    <au><snm>S{\o}rensen</snm><fnm>P</fnm></au>
  </aug>
  <source>Econometrica</source>
  <publisher>Wiley Online Library</publisher>
  <pubdate>2000</pubdate>
  <volume>68</volume>
  <issue>2</issue>
  <fpage>371</fpage>
  <lpage>-398</lpage>
</bibl>

<bibl id="B40">
  <title><p>Conformism and diversity under social learning</p></title>
  <aug>
    <au><snm>Bala</snm><fnm>V</fnm></au>
    <au><snm>Goyal</snm><fnm>S</fnm></au>
  </aug>
  <source>Economic theory</source>
  <publisher>Springer</publisher>
  <pubdate>2001</pubdate>
  <volume>17</volume>
  <issue>1</issue>
  <fpage>101</fpage>
  <lpage>-120</lpage>
</bibl>

<bibl id="B41">
  <title><p>Word-of-mouth learning</p></title>
  <aug>
    <au><snm>Banerjee</snm><fnm>A</fnm></au>
    <au><snm>Fudenberg</snm><fnm>D</fnm></au>
  </aug>
  <source>Games and Economic Behavior</source>
  <publisher>Elsevier</publisher>
  <pubdate>2004</pubdate>
  <volume>46</volume>
  <issue>1</issue>
  <fpage>1</fpage>
  <lpage>-22</lpage>
</bibl>

<bibl id="B42">
  <title><p>Bayesian learning in social networks</p></title>
  <aug>
    <au><snm>Gale</snm><fnm>D</fnm></au>
    <au><snm>Kariv</snm><fnm>S</fnm></au>
  </aug>
  <source>Games and Economic Behavior</source>
  <publisher>Elsevier</publisher>
  <pubdate>2003</pubdate>
  <volume>45</volume>
  <issue>2</issue>
  <fpage>329</fpage>
  <lpage>-346</lpage>
</bibl>

<bibl id="B43">
  <title><p>Sequential decisions with tests</p></title>
  <aug>
    <au><snm>Gill</snm><fnm>D</fnm></au>
    <au><snm>Sgroi</snm><fnm>D</fnm></au>
  </aug>
  <source>Games and economic Behavior</source>
  <publisher>Elsevier</publisher>
  <pubdate>2008</pubdate>
  <volume>63</volume>
  <issue>2</issue>
  <fpage>663</fpage>
  <lpage>-678</lpage>
</bibl>

<bibl id="B44">
  <title><p>Bayesian learning in social networks</p></title>
  <aug>
    <au><snm>Acemoglu</snm><fnm>D</fnm></au>
    <au><snm>Dahleh</snm><fnm>MA</fnm></au>
    <au><snm>Lobel</snm><fnm>I</fnm></au>
    <au><snm>Ozdaglar</snm><fnm>A</fnm></au>
  </aug>
  <source>The Review of Economic Studies</source>
  <publisher>Oxford University Press</publisher>
  <pubdate>2011</pubdate>
  <volume>78</volume>
  <issue>4</issue>
  <fpage>1201</fpage>
  <lpage>-1236</lpage>
</bibl>

<bibl id="B45">
  <title><p>Steering Information Cascades in a Social System by Selective
  Rewiring and Incentive Seeding</p></title>
  <aug>
    <au><snm>Hsiao</snm><fnm>J.</fnm></au>
    <au><snm>Chen</snm><fnm>K. C.</fnm></au>
  </aug>
  <source>to be included in 2016 IEEE International Conference on
  Communications (ICC)</source>
  <pubdate>2016</pubdate>
</bibl>

<bibl id="B46">
  <title><p>Persuasion bias, social influence, and uni-dimensional
  opinions</p></title>
  <aug>
    <au><snm>DeMarzo</snm><fnm>PM</fnm></au>
    <au><snm>Zwiebel</snm><fnm>J</fnm></au>
    <au><snm>Vayanos</snm><fnm>D</fnm></au>
  </aug>
  <source>Social Influence, and Uni-Dimensional Opinions (November 2001). MIT
  Sloan Working Paper</source>
  <pubdate>2001</pubdate>
  <issue>4339-01</issue>
</bibl>

<bibl id="B47">
  <title><p>Naive learning in social networks and the wisdom of
  crowds</p></title>
  <aug>
    <au><snm>Golub</snm><fnm>B</fnm></au>
    <au><snm>Jackson</snm><fnm>MO</fnm></au>
  </aug>
  <source>American Economic Journal: Microeconomics</source>
  <publisher>American Economic Association</publisher>
  <pubdate>2010</pubdate>
  <volume>2</volume>
  <issue>1</issue>
  <fpage>112</fpage>
  <lpage>-149</lpage>
</bibl>

<bibl id="B48">
  <title><p>Spread of (mis) information in social networks</p></title>
  <aug>
    <au><snm>Acemoglu</snm><fnm>D</fnm></au>
    <au><snm>Ozdaglar</snm><fnm>A</fnm></au>
    <au><snm>ParandehGheibi</snm><fnm>A</fnm></au>
  </aug>
  <source>Games and Economic Behavior</source>
  <publisher>Elsevier</publisher>
  <pubdate>2010</pubdate>
  <volume>70</volume>
  <issue>2</issue>
  <fpage>194</fpage>
  <lpage>-227</lpage>
</bibl>

<bibl id="B49">
  <title><p>Non-Bayesian social learning</p></title>
  <aug>
    <au><snm>Jadbabaie</snm><fnm>A</fnm></au>
    <au><snm>Molavi</snm><fnm>P</fnm></au>
    <au><snm>Sandroni</snm><fnm>A</fnm></au>
    <au><snm>Tahbaz Salehi</snm><fnm>A</fnm></au>
  </aug>
  <source>Games and Economic Behavior</source>
  <publisher>Elsevier</publisher>
  <pubdate>2012</pubdate>
  <volume>76</volume>
  <issue>1</issue>
  <fpage>210</fpage>
  <lpage>-225</lpage>
</bibl>

<bibl id="B50">
  <title><p>Social learning and distributed hypothesis testing</p></title>
  <aug>
    <au><snm>Lalitha</snm><fnm>A</fnm></au>
    <au><snm>Sarwate</snm><fnm>A</fnm></au>
    <au><snm>Javidi</snm><fnm>T</fnm></au>
  </aug>
  <source>2014 IEEE International Symposium on Information Theory</source>
  <pubdate>2014</pubdate>
  <fpage>551</fpage>
  <lpage>-555</lpage>
</bibl>

<bibl id="B51">
  <title><p>Distributed hypothesis testing with social learning and symmetric
  fusion</p></title>
  <aug>
    <au><snm>Rhim</snm><fnm>JB</fnm></au>
    <au><snm>Goyal</snm><fnm>VK</fnm></au>
  </aug>
  <source>IEEE Transactions on Signal Processing</source>
  <publisher>IEEE</publisher>
  <pubdate>2014</pubdate>
  <volume>62</volume>
  <issue>23</issue>
  <fpage>6298</fpage>
  <lpage>-6308</lpage>
</bibl>

<bibl id="B52">
  <title><p>Information cascades in social networks via dynamic system
  analyses</p></title>
  <aug>
    <au><snm>Huang</snm><fnm>S. L.</fnm></au>
    <au><snm>Chen</snm><fnm>K. C.</fnm></au>
  </aug>
  <source>2015 IEEE International Conference on Communications (ICC)</source>
  <pubdate>2015</pubdate>
  <fpage>1262</fpage>
  <lpage>1267</lpage>
</bibl>

<bibl id="B53">
  <title><p>Network tomography: recent developments</p></title>
  <aug>
    <au><snm>Castro</snm><fnm>R</fnm></au>
    <au><snm>Coates</snm><fnm>M</fnm></au>
    <au><snm>Liang</snm><fnm>G</fnm></au>
    <au><snm>Nowak</snm><fnm>R</fnm></au>
    <au><snm>Yu</snm><fnm>B</fnm></au>
  </aug>
  <source>Statistical science</source>
  <publisher>JSTOR</publisher>
  <pubdate>2004</pubdate>
  <fpage>499</fpage>
  <lpage>-517</lpage>
</bibl>

<bibl id="B54">
  <title><p>Optimal serial distributed decision fusion</p></title>
  <aug>
    <au><snm>Viswanathan</snm><fnm>R</fnm></au>
    <au><snm>Thomopoulos</snm><fnm>SC</fnm></au>
    <au><snm>Tumuluri</snm><fnm>RAMAKRISHNAJ</fnm></au>
  </aug>
  <source>IEEE Transactions on Aerospace and Electronic Systems</source>
  <publisher>IEEE</publisher>
  <pubdate>1988</pubdate>
  <volume>24</volume>
  <issue>4</issue>
  <fpage>366</fpage>
  <lpage>-376</lpage>
</bibl>

<bibl id="B55">
  <title><p>Distributed detection by a large team of sensors in
  tandem</p></title>
  <aug>
    <au><snm>Papastavrou</snm><fnm>JD</fnm></au>
    <au><snm>Athans</snm><fnm>M</fnm></au>
  </aug>
  <source>IEEE Transactions on Aerospace and Electronic Systems</source>
  <publisher>IEEE</publisher>
  <pubdate>1992</pubdate>
  <volume>28</volume>
  <issue>3</issue>
  <fpage>639</fpage>
  <lpage>-653</lpage>
</bibl>

<bibl id="B56">
  <title><p>On the performance of serial networks in distributed
  detection</p></title>
  <aug>
    <au><snm>Swaszek</snm><fnm>PF</fnm></au>
  </aug>
  <source>IEEE transactions on aerospace and electronic systems</source>
  <publisher>IEEE</publisher>
  <pubdate>1993</pubdate>
  <volume>29</volume>
  <issue>1</issue>
  <fpage>254</fpage>
  <lpage>-260</lpage>
</bibl>

<bibl id="B57">
  <title><p>Serial distributed detection for wireless sensor
  networks</p></title>
  <aug>
    <au><snm>Bahceci</snm><fnm>I</fnm></au>
    <au><snm>Al Regib</snm><fnm>G</fnm></au>
    <au><snm>Altunbasak</snm><fnm>Y</fnm></au>
  </aug>
  <source>Information Theory, 2005. ISIT 2005. Proceedings. International
  Symposium on</source>
  <pubdate>2005</pubdate>
  <fpage>830</fpage>
  <lpage>-834</lpage>
</bibl>

<bibl id="B58">
  <title><p>A Technological Perspective on Information Cascades via Social
  Learning</p></title>
  <aug>
    <au><snm>Rosas</snm><fnm>F</fnm></au>
    <au><snm>Hsiao</snm><fnm>JH</fnm></au>
    <au><snm>Chen</snm><fnm>KC</fnm></au>
  </aug>
  <source>IEEE Access</source>
  <publisher>IEEE</publisher>
  <pubdate>2017</pubdate>
  <volume>5</volume>
  <fpage>22605</fpage>
  <lpage>-22633</lpage>
</bibl>

<bibl id="B59">
  <title><p>Social Learning Against Data Falsification in Sensor
  Networks</p></title>
  <aug>
    <au><snm>Rosas</snm><fnm>F</fnm></au>
    <au><snm>Chen</snm><fnm>KC</fnm></au>
  </aug>
  <source>International Workshop on Complex Networks and their
  Applications</source>
  <pubdate>2017</pubdate>
  <fpage>704</fpage>
  <lpage>-716</lpage>
</bibl>

<bibl id="B60">
  <title><p>Modulation and SNR optimization for achieving energy-efficient
  communications over short-range fading channels</p></title>
  <aug>
    <au><snm>Rosas</snm><fnm>F</fnm></au>
    <au><snm>Oberli</snm><fnm>C</fnm></au>
  </aug>
  <source>IEEE Transactions on Wireless Communications</source>
  <publisher>IEEE</publisher>
  <pubdate>2012</pubdate>
  <volume>11</volume>
  <issue>12</issue>
  <fpage>4286</fpage>
  <lpage>-4295</lpage>
</bibl>

<bibl id="B61">
  <title><p>Applications and trends in wireless acoustic sensor networks: A
  signal processing perspective</p></title>
  <aug>
    <au><snm>Bertrand</snm><fnm>A.</fnm></au>
  </aug>
  <source>2011 18th IEEE Symposium on Communications and Vehicular Technology
  in the Benelux (SCVT)</source>
  <pubdate>2011</pubdate>
  <fpage>1</fpage>
  <lpage>6</lpage>
</bibl>

<bibl id="B62">
  <title><p>Optimal data fusion of correlated local decisions in multiple
  sensor detection systems</p></title>
  <aug>
    <au><snm>Kam</snm><fnm>M</fnm></au>
    <au><snm>Zhu</snm><fnm>Q</fnm></au>
    <au><snm>Gray</snm><fnm>WS</fnm></au>
  </aug>
  <source>IEEE Transactions on Aerospace and Electronic Systems</source>
  <publisher>IEEE</publisher>
  <pubdate>1992</pubdate>
  <volume>28</volume>
  <issue>3</issue>
  <fpage>916</fpage>
  <lpage>-920</lpage>
</bibl>

<bibl id="B63">
  <title><p>Adaptive fusion of correlated local decisions</p></title>
  <aug>
    <au><snm>Chen</snm><fnm>J G</fnm></au>
    <au><snm>Ansari</snm><fnm>N</fnm></au>
  </aug>
  <source>IEEE Transactions on Systems, Man, and Cybernetics, Part C
  (Applications and Reviews)</source>
  <publisher>IEEE</publisher>
  <pubdate>1998</pubdate>
  <volume>28</volume>
  <issue>2</issue>
  <fpage>276</fpage>
  <lpage>-281</lpage>
</bibl>

<bibl id="B64">
  <title><p>The good, bad and ugly: distributed detection of a known signal in
  dependent Gaussian noise</p></title>
  <aug>
    <au><snm>Willett</snm><fnm>P</fnm></au>
    <au><snm>Swaszek</snm><fnm>PF</fnm></au>
    <au><snm>Blum</snm><fnm>RS</fnm></au>
  </aug>
  <source>IEEE Transactions on Signal Processing</source>
  <publisher>IEEE</publisher>
  <pubdate>2000</pubdate>
  <volume>48</volume>
  <issue>12</issue>
  <fpage>3266</fpage>
  <lpage>-3279</lpage>
</bibl>

<bibl id="B65">
  <title><p>How dense should a sensor network be for detection with correlated
  observations?</p></title>
  <aug>
    <au><snm>Chamberland</snm><fnm>J F</fnm></au>
    <au><snm>Veeravalli</snm><fnm>VV</fnm></au>
  </aug>
  <source>IEEE Transactions on Information Theory</source>
  <publisher>IEEE</publisher>
  <pubdate>2006</pubdate>
  <volume>52</volume>
  <issue>11</issue>
  <fpage>5099</fpage>
  <lpage>-5106</lpage>
</bibl>

<bibl id="B66">
  <title><p>Copula-based fusion of correlated decisions</p></title>
  <aug>
    <au><snm>Sundaresan</snm><fnm>A</fnm></au>
    <au><snm>Varshney</snm><fnm>PK</fnm></au>
    <au><snm>Rao</snm><fnm>NS</fnm></au>
  </aug>
  <source>IEEE Transactions on Aerospace and Electronic Systems</source>
  <publisher>IEEE</publisher>
  <pubdate>2011</pubdate>
  <volume>47</volume>
  <issue>1</issue>
  <fpage>454</fpage>
  <lpage>-471</lpage>
</bibl>

<bibl id="B67">
  <title><p>Probability Theory I</p></title>
  <aug>
    <au><snm>Loeve</snm><fnm>M.</fnm></au>
  </aug>
  <publisher>Springer</publisher>
  <pubdate>1978</pubdate>
</bibl>

<bibl id="B68">
  <title><p>Protocols and architectures for wireless sensor
  networks</p></title>
  <aug>
    <au><snm>Karl</snm><fnm>H</fnm></au>
    <au><snm>Willig</snm><fnm>A</fnm></au>
  </aug>
  <publisher>John Wiley \& Sons</publisher>
  <pubdate>2007</pubdate>
</bibl>

<bibl id="B69">
  <title><p>Malware diffusion models for modern complex networks: theory and
  applications</p></title>
  <aug>
    <au><snm>Karyotis</snm><fnm>V</fnm></au>
    <au><snm>Khouzani</snm><fnm>MHR</fnm></au>
  </aug>
  <publisher>Morgan Kaufmann</publisher>
  <pubdate>2016</pubdate>
</bibl>

<bibl id="B70">
  <title><p>An introduction to signal detection and estimation</p></title>
  <aug>
    <au><snm>Poor</snm><fnm>HV</fnm></au>
  </aug>
  <publisher>Springer Science \& Business Media</publisher>
  <pubdate>2013</pubdate>
</bibl>

<bibl id="B71">
  <title><p>Network resilience: a systematic approach</p></title>
  <aug>
    <au><snm>Smith</snm><fnm>P</fnm></au>
    <au><snm>Hutchison</snm><fnm>D</fnm></au>
    <au><snm>Sterbenz</snm><fnm>JP</fnm></au>
    <au><snm>Sch{\"o}ller</snm><fnm>M</fnm></au>
    <au><snm>Fessi</snm><fnm>A</fnm></au>
    <au><snm>Karaliopoulos</snm><fnm>M</fnm></au>
    <au><snm>Lac</snm><fnm>C</fnm></au>
    <au><snm>Plattner</snm><fnm>B</fnm></au>
  </aug>
  <source>IEEE Communications Magazine</source>
  <publisher>IEEE</publisher>
  <pubdate>2011</pubdate>
  <volume>49</volume>
  <issue>7</issue>
  <fpage>88</fpage>
  <lpage>-97</lpage>
</bibl>

<bibl id="B72">
  <title><p>Conversation, information, and herd behavior</p></title>
  <aug>
    <au><snm>Shiller</snm><fnm>RJ</fnm></au>
  </aug>
  <source>The American Economic Review</source>
  <publisher>JSTOR</publisher>
  <pubdate>1995</pubdate>
  <volume>85</volume>
  <issue>2</issue>
  <fpage>181</fpage>
  <lpage>-185</lpage>
</bibl>

<bibl id="B73">
  <title><p>Bayesian data analysis</p></title>
  <aug>
    <au><snm>Gelman</snm><fnm>A</fnm></au>
    <au><snm>Carlin</snm><fnm>JB</fnm></au>
    <au><snm>Stern</snm><fnm>HS</fnm></au>
    <au><snm>Dunson</snm><fnm>DB</fnm></au>
    <au><snm>Vehtari</snm><fnm>A</fnm></au>
    <au><snm>Rubin</snm><fnm>DB</fnm></au>
  </aug>
  <publisher>CRC press Boca Raton, FL</publisher>
  <pubdate>2014</pubdate>
  <volume>2</volume>
</bibl>

<bibl id="B74">
  <title><p>Elements of information theory</p></title>
  <aug>
    <au><snm>Cover</snm><fnm>TM</fnm></au>
    <au><snm>Thomas</snm><fnm>JA</fnm></au>
  </aug>
  <publisher>John Wiley \& Sons</publisher>
  <pubdate>2012</pubdate>
</bibl>

<bibl id="B75">
  <title><p>Understanding interdependency through complex information
  sharing</p></title>
  <aug>
    <au><snm>Rosas</snm><fnm>F</fnm></au>
    <au><snm>Ntranos</snm><fnm>V</fnm></au>
    <au><snm>Ellison</snm><fnm>CJ</fnm></au>
    <au><snm>Pollin</snm><fnm>S</fnm></au>
    <au><snm>Verhelst</snm><fnm>M</fnm></au>
  </aug>
  <source>Entropy</source>
  <publisher>Multidisciplinary Digital Publishing Institute</publisher>
  <pubdate>2016</pubdate>
  <volume>18</volume>
  <issue>2</issue>
  <fpage>38</fpage>
</bibl>

<bibl id="B76">
  <title><p>Treatise on Analysis</p></title>
  <aug>
    <au><snm>Dieudonne</snm><fnm>J.</fnm></au>
  </aug>
  <publisher>Associated Press, New York</publisher>
  <pubdate>1976</pubdate>
  <volume>II</volume>
</bibl>

<bibl id="B77">
  <title><p>Detecting changes in water quality data</p></title>
  <aug>
    <au><snm>McKenna</snm><fnm>SA</fnm></au>
    <au><snm>Wilson</snm><fnm>M</fnm></au>
    <au><snm>Klise</snm><fnm>KA</fnm></au>
  </aug>
  <source>American Water Works Association. Journal</source>
  <publisher>American Water Works Association</publisher>
  <pubdate>2008</pubdate>
  <volume>100</volume>
  <issue>1</issue>
  <fpage>74</fpage>
</bibl>

</refgrp>
}

\end{document}